\def\BibTeX{{\rm B\kern-.05em{\sc i\kern-.025em b}\kern-.08em
		T\kern-.1667em\lower.7ex\hbox{E}\kern-.125emX}}
\newcommand{\cmark}{\ding{51}}
\newcommand{\blue}[1]{\textcolor{blue}{#1}}
\newcommand{\loadspy}[0]{\mbox{\textsc{LoadSpy}}}
\newcommand*\rot{\rotatebox{90}} 
\theoremstyle{}
\newtheorem{definition}{Definition}
\newtheorem{claim}{Claim}
\newtheorem{observation}{Observation}
\begin{document}

%% Title information
\title{Redundant Loads: A Software Inefficiency Indicator}
%\title[Evaluating Software Inefficiencies via Whole-program Load-redundancy Detection]{Evaluating Software Inefficiencies via Whole-program Load Redundancy Detection}
%\title[Whole-program Load Redundancy Detection and Evaluation]{Whole-program Load Redundancy Detection and Evaluation}         %% [Short Title] is optional;

\author{\IEEEauthorblockN{Pengfei Su, Shasha Wen}
\IEEEauthorblockA{%\textit{dept. name of organization (of Aff.)} \\
\textit{College of William \& Mary}\\
%Williamsburg, USA \\
\{psu, swen\}@email.wm.edu}
\and
\IEEEauthorblockN{Hailong Yang}
\IEEEauthorblockA{%\textit{dept. name of organization (of Aff.)} \\
\textit{Beihang University}\\
%Beijing, China \\
hailong.yang@buaa.edu.cn}
\and
\IEEEauthorblockN{Milind Chabbi}
\IEEEauthorblockA{%\textit{dept. name of organization (of Aff.)} \\
\textit{Scalable Machines Research}\\
%San Francisco, USA \\
milind@scalablemachines.org}
\and
\IEEEauthorblockN{Xu Liu}
\IEEEauthorblockA{%\textit{dept. name of organization (of Aff.)} \\
\textit{College of William \& Mary}\\
%Williamsburg, USA\\
xl10@cs.wm.edu}
}

\maketitle

\begin{abstract}
Modern software packages have become increasingly complex with millions of lines of code and references to many external libraries. 
Redundant operations are a common performance limiter in these code bases.
Missed compiler optimization opportunities, inappropriate data structure and algorithm choices, and developers' inattention to performance are some common reasons for the existence of redundant operations. 
Developers mainly depend on compilers to eliminate redundant operations. 
However, compilers' static analysis often misses optimization opportunities due to ambiguities and limited analysis scope; automatic optimizations to algorithmic and data structural problems are out of scope. 

We develop \loadspy{}, a whole-program profiler to pinpoint redundant {\em memory load} operations, which are often a symptom of many redundant operations. 
The strength of \loadspy{} exists in identifying and quantifying redundant load operations in programs and associating the redundancies with program execution contexts and scopes to focus developers' attention on problematic code.
\loadspy{} works on fully optimized binaries, adopts various optimization techniques to reduce its overhead, and provides a rich graphic user interface, which make it a complete developer tool. 
Applying \loadspy{} showed that a large fraction of redundant loads is common in modern software packages despite highest levels of automatic compiler optimizations. 
Guided by \loadspy{}, we optimize several well-known benchmarks and real-world applications, yielding significant speedups.
\end{abstract}

\begin{IEEEkeywords}
Whole-program profiling, Software optimization, Performance measurement, Tools.
\end{IEEEkeywords}

\section{Introduction}
\label{sec:introduction}

Production software packages have become increasingly complex. 
They are comprised of a large amount of source code, sophisticated control and data flow, a hierarchy of component libraries, and growing levels of abstractions. 
This complexity often introduces inefficiencies across the software stacks, leading to resource wastage, performance degradation, and energy dissipation~\cite{Molyneaux:2009:AAP:1550832, Bryant:2010:CSP:1841497}.
Such inefficiencies are usually in the form of useless or redundant operations, such as computations whose results may not be used~\cite{Butts:2002:DDD:605397.605419, journals/jilp/SengT05}, re-computation of already computed values~\cite{RVN}, unnecessary data movement~\cite{chabbi2012deadspy, 854389, 6557169, Marin-sweep3d, redspy}, and excessive synchronization~\cite{Chabbi:2015:BEP:2688500.2688502, Tallent:2010:ALC:1837853.1693489}. 
The provenance of these inefficiencies can be many: rigid abstraction boundaries, missed opportunities to optimize common cases, suboptimal algorithm design, inappropriate data structure selection, and poor compiler code generation. 

There is a long history of compiler optimizations aimed at statically analyzing and eliminating redundant operations by techniques such as common sub-expression elimination~\cite{deitz2001eliminating}, value numbering~\cite{gvn},  constant propagation~\cite{Wegman:1991:CPC:103135.103136},  to name a few. 
However, they have a myopic view of the program, which limits their analysis to a small scope---individual functions or files. 
Layers of abstractions, dynamically loaded libraries, multi-lingual components, aggregate types, aliasing, sophisticated flows of control, input-specific path-specific redundancies, and combinatorial explosion of execution paths make it practically impossible for compilers to obtain a holistic view of an application to eliminate all redundancies. 
Link-time optimization~\cite{Fernandez:1995:SEL:207110.207121} can offer better visibility, however, the analysis is still conservative and may err on the side of being less exhaustive to reduce prohibitive analysis cost.
Whole-program link-time optimizations~\cite{Johnson:2017:TSI:3049832.3049845, citeulike:481261} have provided less than 5\% average speedup, although a lot more headroom exists as we show in our work.
%Moreover, static compiler analysis is well-known for its inaccuracy in analyzing aliasing and pointers. 
Thus, despite their best efforts, compilers often fall short of eliminating runtime inefficiencies. %, even with the highest level of optimization.

%The CPU computing units and Memory subsystems are two top resources that are highly related with the performance of the programs. The CPU frequency can be much higher on modern architectures, making the computing unit much faster. But, still, these units are limited, inefficient usage is kind of resource wasting and neither a good sign for power saving. Unnecessary computations should be eliminated and save the resource for more valuable work. Compared to computing cycles, memory accesses are more costly. Besides, memory latencies account for a significant part of the total execution time, especially for those memory-intensive applications. Unnecessary memory accesses, whether cache hit or miss introduce additional overhead and lead to poor resource utilization.
Execution profiling aims to understand the runtime behavior of a program.
Performance analysis tools such as HPCToolkit~\cite{adhianto2010hpctoolkit}, VTune~\cite{vtune}, perf~\cite{perf}, gprof~\cite{Graham-etal:1982:PLDI-gprof}, OProfile~\cite{Levon:OProfile}, and CrayPAT~\cite{DeRose-etal:2008:CrayPAT} monitor code execution to identify hot code regions, idle CPU cycles,  arithmetic intensity, and cache misses, etc. 
These tools can recognize the utilization (saturation or underutilization) of hardware resources, but they cannot inform whether a resource is being used in a \emph{fruitful} manner that contributes to the overall efficiency of a program. 
A hotspot need not mean inefficient code, and conversely, the lack of a hotspot need not mean better code.
Coarse-grained profilers usually cannot distinguish efficient vs. inefficient code; for example, they cannot identify that repeated memory loads of the same value or result-equivalent computations waste both memory bandwidth and processor functional units.

 \emph{Whole-program fine-grained monitoring} is a means to monitor execution at microscopic details: it monitors each binary instruction instance, including its operator, operands, and runtime values in registers and memory. 
A key advantage of microscopic program-wide monitoring is that it can identify redundancies irrespective of the user-level program abstractions. 
Prior work~\cite{chabbi2012deadspy, RVN, redspy, toddler} has shown that the fine-grained profiling techniques can identify many forms of software inefficiencies and offer detailed guidance to tune code. 

Existing fine-grained profilers pinpoint inefficiencies in a subset of individual operations such as operations with symbolic equivalence~\cite{RVN}, dead memory stores~\cite{chabbi2012deadspy}, and operations writing same values to target registers or memory locations~\cite{redspy}. 
They have, however, overlooked an important category \emph{temporal load redundancy}---loading the same value from the same memory location. 
For instance, the code on the left of Listing~\ref{lst:example} shows redundant operations that are invisible in existing fine-grained profilers. 
In this code, suppose all the scalars are in registers and vectors are in memory. 
Because there are no ``dead store'' operations (a store followed by another store to the same location without an intervening load), DeadSpy~\cite{chabbi2012deadspy} does not identify any inefficiency.
Since the values written in $t$ and $delta$ always change, RedSpy~\cite{redspy} does not report any ``silent store'' operations~\cite{854389}. 
Finally, since there is no symbolic equivalent computation, RVN~\cite{RVN} does not report any inefficiency. 
Furthermore, because the optimization involves the mathematically equivalent transformation, as shown on the right of Listing~\ref{lst:example}, it is difficult to optimize with other compiler techniques such as polyhedral optimization~\cite{POP2006GRAPHITE}.

\begin{figure}
\begin{minipage}[t]{.49\linewidth}
\begin{lstlisting}[firstnumber=1,language=c]
while (t < threshold) {
  t = 0;
  for(i = 0; i < N; i++) 
@$\blacktriangleright$@   t += A[i] + B[i]*delta;
  delta -= 0.1 * t;
}
\end{lstlisting}
\end{minipage}\hfill
\begin{minipage}[t]{.47\linewidth}
\begin{lstlisting}[firstnumber=1,language=c]
for (i = 0; i < N; i++)
  a += A[i];  b += B[i];
while (t < threshold) {
  t = a + b * delta;
  delta -= 0.1 * t;
}
\end{lstlisting}
\end{minipage}
\vspace{-0.3in}
\captionof{lstlisting}{An example code (on the left) with temporal inefficiencies that cannot be identified by existing fine-grained profilers. 
Because arrays $A$ and $B$ are immutable in the loop nest, computing on these loop invariants introduces many redundancies. One can hoist the redundant computation outside of the loop (on the right) for optimization.%However, there is no redundant or dead write operations involved in the redundant computation, therefore RedSpy~\cite{redspy} and DeadSpy~\cite{chabbi2012deadspy} cannot identify the inefficiencies. Moreover, there is no equivalent computation due to the value change of $delta$ in every iteration, so RVN~\cite{RVN} cannot identify the inefficiencies either.
}
\vspace{-0.1in}
\label{lst:example}
\end{figure}

The  code on the left of Listing~\ref{lst:exampleSpatial} shows another kind of load redundancy, which loads the same value from the \emph{nearby} memory locations. Even though each element of array $A$ is only loaded once, adjacent elements with the same values result in loading the same value and the subsequent redundant computation. 
We refer to this type of redundancy as \emph{spatial load redundancy}. As a practical example, a sparse matrix with a dense format can yield many spatial load redundancies.
%DeadSpy and RVN fail to detect such redundancy since there are no memory store operations and symbolic equivalent computation.
%RedSpy is able to identify the redundant values written to array $A$, but pinpointing to the real redundant computation at line 4 still requires manual efforts.
% \red{Redspy may identify such inefficiency, but requires significant manual efforts.} 
%However it is a tough and inefficient task for users to define instrumentation hooks, especially when the code base is complex.

\begin{figure}
\begin{minipage}[t]{.48\linewidth}
\begin{lstlisting}[firstnumber=1,language=c]
int A[N] = {1, 1, 1, 15}; 
for(i = 0; i < N; i++) 
{
@$\blacktriangleright$@ t += func(A[i]);
}
\end{lstlisting}
\end{minipage}\hfill
\begin{minipage}[t]{.48\linewidth}
\begin{lstlisting}[firstnumber=1,language=c]
int A[N] = {1, 1, 1, 15}; 
a = func(A[0]);
for(i = 0; i < N; i++) {
  if (A[i] != A[i-1]) 
    a = func(A[i]);
  t += a; }
\end{lstlisting}
\end{minipage}
\vspace{-0.3in}
\captionof{lstlisting}{An example code (on the left) with spatial inefficiencies that cannot be identified by existing fine-grained profilers. 
The load redundancy happens at line 4 where the program reads the same value from the nearby memory locations since some adjacent elements of array $A$ have the same value. Such redundancy further results in redundant computation involved in the function $func$. Because $func$ always returns the same value for the same input. One can compare if the adjacent elements in array $A$ are equivalent to eliminate redundant computation (on the right). If they are the same, one can reuse the return value of $func$, which is generated in the previous iteration.
}
\vspace{-.2in}
\label{lst:exampleSpatial}
\end{figure}

Listing~\ref{lst:example}  and~\ref{lst:exampleSpatial} show a tip of the iceberg of the inefficiencies we target in this paper to complement existing tools. 
From our observation, a variety of inefficiencies exhibit \emph{substantial} redundant loads; conversely, the presence of a large fraction of redundant loads in an execution is a symptom of some kind of inefficiency \emph{in the code regions} that exhibit such redundancy.
Furthermore, the subsequent operations based on redundant loads are potentially redundant. 
%\red{For example, on the left of Listing~\ref{lst:example}, the repeated memory loads from $A[i]$ in the $while$ loop are redundant; the situation is the same for the loads from $B[i]$.}
%We note, however, that not all redundant loads mean inefficiency, and not all need be eliminated.
%Only high-frequency redundant loads demand investigation.

We have designed and implemented a developer tool---\loadspy{}---aimed at profiling an execution and quantifying load redundancy in the execution.
 \loadspy{} highlights precise source code in its full calling contexts and the two parties involved in a redundant load.
 Additionally, \loadspy{} narrows down the investigation scope to help developers focus on the provenance of inefficiencies.
 A thorough evaluation on a suite of benchmarks and real-world applications shows that looking for redundant loads in a program offers an easy avenue for performance enhancement in many programs.

%Moreover, as the nature of fine-grained analysis, existing fine-grained profilers produce a large volume of analysis results in texts, which is difficult for non-experts to explore the inefficiencies in source code.

In this paper~\footnote{This is a full-version of our ICSE paper~\cite{loadspy}.}, we make the following contributions:
\begin{itemize}[leftmargin=*]
\item Show that redundant loads are a common indicator of various forms of software inefficiencies. This finding serves as the foundation of \loadspy{}.
\item Describe the design of \loadspy{}---a whole-program fine-grained profiler to pinpoint redundant loads.
\item Develop strategies for analyzing a large volume of profiling data by attributing redundancy to runtime contexts, objects, and scopes.%that offer intuitive optimization guidance.
\item Enable rich visualization for a large volume of profiling data coming from different threads/processes with a user-friendly GUI, which improves the usability for non-experts.
% to demonstrate \loadspy{}'s fine-grained analysis results to fundamentally address the massive data volume generated by fine-grained profilers.
\item Apply \loadspy{} to pinpoint inefficiencies in well-known benchmarks and real-world applications that were the subjects of study and optimization for years and eliminate \loadspy{}-found inefficiencies by avoiding redundant loads, which yield nontrivial speedups.
\end{itemize}

\begin{comment}
We organize this paper as follows: \S~\ref{sec:related} shows related work and distinguishes our work. \S~\ref{sec:motivation} describes our findings of inefficiencies in various forms with a common symptom---load redundancy. \S~\ref{sec:methodology} and \S~\ref{sec:implementation} describe the design and implementation details of \loadspy{}. \S~\ref{sec:experiment} evaluates \loadspy{}, and \S~\ref{sec:use} shows several case studies. \S~\ref{sec:conclusion} presents our conclusions.
\end{comment}

\section{Related Work}
\label{sec:related}

%Compilers employ a variety of techniques, e.g., value numbering, constant propagation, and partial redundancy elimination, to eliminate redundant operations. 
%Elaborating these compiler techniques is outside the scope of this paper. 
%Beyond these classical compiler techniques, 
There exist many compiler techniques and static analysis techniques~\cite{cooper2008redundancy,deitz2001eliminating,Luo:2014:OSC:2628071.2628121,hundt2011mao} to identify redundant computation. 
However, these static approaches suffer from limitations related to the precision of alias information, optimization scope, and insensitivity to inputs and execution contexts. To address these issues, recent approaches convert the source code to specific notations for redundancy detection and removal~\cite{Ding:2017:GGL:3152284.3133898}, or target specific algorithm for optimization~\cite{Ding:2017:GTD:3062341.3062377}. However, these approaches require substantial prior knowledge to identify whether a program suffers from redundancies that are worthy of optimization.
In contrast, \loadspy{} monitors execution, avoids inaccuracies associated with compile-time analysis, and needs no prior knowledge of the measured programs. 
%On the flip side, \loadspy{} cannot perform any automatic optimization, distinguish whether a redundancy found in one execution is redundant across all inputs, or distinguish whether a redundancy is introduced for a specific purpose, e.g., vectorization or security.

There exist many hardware-based approaches~\cite{Lipasti:1996:VLL:237090.237173,Lipasti:1996:EDL:243846.243889,854389,Lepak:2000:SSF:360128.360133,miguel2014load, miguel2015doppelganger,yazdanbakhsh2016rfvp,Butts:2002:DDD:605397.605419} that optimize redundant operations during program execution. However, these approaches require hardware extension, which is unavailable in commodity processors. 
%Although hardware may hide some software inefficiencies, it incurs additional cost---more transistors and more power.
%Software inefficiencies are better fixed in software; hardware seldom takes advantage of program semantics, and cannot eliminate redundancy due to problematic data structures and algorithms. 
Instead, \loadspy{} is a pure software approach and does not need any hardware changes. 
The remaining section reviews only other profiling techniques. %which is close to the \loadspy{}'s approach.

\subsection{Value profiling}
\loadspy{} is a value-aware profiler; value profiling techniques are closely related to our work. 
%Bell et al.~\cite{Bell:2000:CSS:517554.825791} explored silent stores with source code analysis and compiler optimization levels. 
%Like us, they inferred that the root cause of silent stores is often algorithmic in nature.
Calder et al.~\cite{Calder:1997:VP:266800.266825,Calder99valueprofiling,Feller98valueprofiling} proposed probably the first value profiler on DEC Alpha processors. 
They instrumented the program code and recorded top N values to pinpoint invariant or semi-invariant variables stored in registers or memory. 
A variant of this value profiler was proposed in a later research~\cite{Watterson:2001:GVP:647477.760386}. 
Burrows et al.~\cite{Burrows:2000:EFV:378993.379236} used hardware performance counters to sample values in Digital Continuous Profiling Infrastructure~\cite{DCPI}. 
Wen et al.~\cite{witch} combined performance monitoring units and debug registers available in x86 to identify redundant memory operations. These approaches do not explore whole-program load redundancy in depth. 
%Users require substantial manual efforts to optimize code due to the lack of intuitive guidance. 
Moreover, none of them detect spatial redundancy.
%\red{but their sampling techniques provide course-grained information, which is insufficient in guiding intuitive optimization.}

%Unlike \loadspy{}, their approach (1) does not identify spatial and approximate value redundancies, (2) does not provide calling contexts of instructions that have redundant values, and (3) does not provide additional information, such as redundancy scope, objects involved in redundancy, and derived metrics to effectively reason about the inefficiencies for code optimization guidance. 

Some code specialization work depends on value profiling. 
However, these approaches limit themselves to only analyzing registers~\cite{Muth:2000:CSB:647169.718148}, static instructions~\cite{Oh:2013:PAL:2451116.2451161}, memory store operations~\cite{redspy}, or functions~\cite{Chung00energyefficient,Kamio04avalue,vprof}. 
They omit many optimization opportunities and require significant manual efforts to reason about the root causes of inefficiencies.

%Chung et al.~\cite{Chung00energyefficient} developed a procedure-level value profiler, which identifies redundant values passed to the same function as parameters multiple times. Kamio and Masuhara~\cite{Kamio04avalue} proposed a similar method-level value profiling in JAVA programs. 
%These two approaches omit the redundancies that happen elsewhere, e.g., individual instructions or loops.

Unlike existing value profilers, \loadspy{} has four distinct features. 
First, \loadspy{} is the first value profiler that tracks the \emph{history of loaded values} from individual \emph{memory locations}, rather than the values produced by \emph{individual instructions}.
Second, \loadspy{} identifies both \emph{temporal and spatial} redundancies in load operations. 
Third, \loadspy{} provides novel redundancy scope and metrics to guide optimization in both contexts and semantics. 
Fourth, \loadspy{} not only identifies redundancy arising due to exactly the same values but also identifies redundancy due to approximately equal values, which offers opportunities for \emph{approximate computing}.

\subsection{Value-agnostic profiling}
RVN~\cite{RVN} assigns symbolic values to dynamic instructions and identifies redundancy on the fly. DeadSpy~\cite{chabbi2012deadspy} tracks every memory operation to pinpoint a store operation that is not loaded before a subsequent store to the same location. 
MemoizeIt~\cite{DellaToffola:2015:PPY:2814270.2814290} detects Java methods that perform identical computations. Travioli~\cite{Padhye:2017:TRD:3097368.3097425} detects redundant data structure traversals.
These approaches miss out on certain opportunities that \loadspy{} can detect by explicitly inspecting values generated at runtime.

% Both of them focus on a small group of redundancies, while \loadspy{} focuses on generic load redundancy.

Toddler~\cite{toddler} has to manually add loop events to instrument loops in a C code base and only identifies repetitive memory loads across loop iterations.
The follow-on work LDoctor~\cite{ldoctor} reduces Toddler's overhead using a combination of ad-hoc sampling and static analysis techniques. 
LDoctor  instruments a small number of suspicious loops at compile time. 
This technique can miss redundant loads in different loops. 
In contrast, \loadspy{} works on fully optimized binaries, is independent of any compiler, and performs the whole-program profiling instead of limiting itself to only profiling loops.

%Unlike these profilers, \loadspy{} examines values in memory and provides rich information to guide code optimization.

\section{Redundant Loads: An Inefficiency Symptom}
\label{sec:motivation}

%\loadspy{} makes idealistic assumptions about the system: a machine with infinite registers and a perfect code analysis for a given input (ignoring any other input). 
%However, neither assumption is true on real systems. 
%Hence, in reality, we observe data being loaded from memory into registers and written back. \loadspy{} aims to detect and quantify this deviation from the ideal behavior. 

While there are several ways to identify the inefficiency, \loadspy{} focuses on memory load operations. 
If two consecutive load operations performed on the same memory location load the same value, the second load operation can be deemed useless. %had the system retained the value from the previous load operation.
Thus, the second load could potentially be elided. 
Our study aims to quantify redundant loads and attribute them to the code regions that cause them. 
\emph{A single instance of a redundant load is uninteresting; highly frequent redundant loads occurring in the same code location demand attention.}

It is easy to imagine how redundant loads happen: repeatedly accessing immutable data structures or algorithms employing memoization.
It is equally easy to see how inefficient code sequences show up as redundant loads: missed inlining appears as repeatedly loading the same values in a callee, imperfect alias information shows up as loading the same values from the same location via two different pointers, redundant computations show up as the same computations being performed by loading unchanged values, algorithmic defects, e.g., frequent linear searches or hash collisions, also appear as repeatedly loading unchanged values from the same locations. 

\begin{comment}
\begin{definition}[Temporal Load Redundancy]
A memory load operation $L_2$, loading value $V_1$ from location $M$, is redundant $iff$ the previous load operation $L_1$,  performed on $M$, also loaded the same value $V_1$. 

\begin{definition}[Spatial Load Redundancy]
A memory load operation $L_2$, loading a value $V_1$ from location $M_2$, is redundant $iff$ the previous load operation $L_1$,  performed on location $M_1$, also loaded the same value $V_1$ and $M_1$ and $M_2$ both belong to the address span of the same data object.
\end{definition}
\end{comment}

\begin{definition}[Temporal Load Redundancy]
A memory load operation $L_2$, loading value $V_2$ from location $M$, is redundant $iff$ the previous load operation $L_1$,  performed on $M$, loaded a value $V_1$ and $V_1 = V_2$. If $V_1 \approx V_2$, we call it approximate temporal load redundancy.
\end{definition}

\begin{definition}[Spatial Load Redundancy]
A memory load operation $L_2$, loading a value $V_2$ from location $M_2$, is redundant $iff$ the previous load operation $L_1$,  performed on location $M_1$, loaded a value $V_1$ and $V_1 = V_2$, and $M_1$ and $M_2$ belong to the address span of the same data object. If $V_1 \approx V_2$, we call it approximate spatial load redundancy.
\end{definition}

\begin{definition}[Redundancy Fraction]
We define the \emph{redundancy fraction} ${\mathcal R}$ in an execution as the ratio of bytes redundantly loaded to the total bytes loaded in the entire execution. 
\end{definition}

We emphasize that the redundancy is defined for instruction instances, \emph{not} static instructions.
Deleting an instruction involved in one instance of a redundant load can be unsafe.

\begin{observation}
Large redundancy fraction ($\mathcal R$) in the execution profile of a program is a symptom of some kind of software inefficiency.
\end{observation}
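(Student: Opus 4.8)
The plan is to argue this observation along two complementary lines: a \emph{deductive} line showing that a large $\mathcal{R}$ necessarily witnesses a proportional amount of repeated computation, and an \emph{empirical} line showing that, in real programs, this repeated computation is an \emph{actionable} inefficiency. The statement is not a closed-form theorem---``symptom'' and ``some kind of inefficiency'' are claims about the behavior of production software---so the argument deliberately couples a definitional plausibility argument with the evidence gathered over the benchmark suite and case studies.

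First I would make the definitional link precise. By the Temporal and Spatial Load Redundancy definitions, every byte counted in $\mathcal{R}$ corresponds to a load instance $L_2$ whose value $V_2$ was already available---at the same location $M$, or at a sibling address within the same object---from an earlier load $L_1$. I would then walk the data-dependence chain rooted at $L_2$: any instruction instance whose operands are transitively derived only from $L_2$ and from other inputs that were themselves unchanged since their previous instances must recompute a value it already produced, and is therefore redundant in the sense of RedSpy~\cite{redspy} and RVN~\cite{RVN}. Hence a large $\mathcal{R}$ is a \emph{witness} that a proportional volume of downstream arithmetic, memory traffic, and functional-unit occupancy is spent regenerating known results---precisely the ``resource wastage'' notion of inefficiency from \S\ref{sec:introduction}.

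Next I would enumerate the canonical provenances already sketched in \S\ref{sec:motivation}---loop-invariant loads inside a loop nest (Listing~\ref{lst:example}), missed inlining that reloads the same arguments in a callee, imprecise aliasing that reloads through a second pointer, memoization-style re-reads, repeated linear scans and hash collisions, and dense storage of sparse data (Listing~\ref{lst:exampleSpatial})---and give, for each, the one-line accounting that it emits redundant load bytes in proportion to the repeated work, so that it is flagged exactly when $\mathcal{R}$ is large. This places the common inefficiency patterns \emph{inside} the set surfaced by a high $\mathcal{R}$; the two listings serve as base cases. I would then supply the empirical leg, since ``symptom'' is ultimately a claim about correlation with removable defects: across the benchmarks and applications, measure $\mathcal{R}$, apply the source-level transformations that \loadspy{} points to via its context and scope attribution, and report the speedups; conversely, spot-check that code already considered well tuned exhibits a small $\mathcal{R}$.

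The main obstacle is that a large $\mathcal{R}$ is \emph{necessary} evidence of repeated work but not by itself \emph{sufficient} evidence of a \emph{removable} inefficiency: a redundant load can be forced by the ISA (reloading a register the compiler had to spill), can be cheaper to repeat than to cache, or can already sit on an optimal schedule. Bridging this gap is exactly what the redundancy-scope and calling-context attribution of \loadspy{} is for---it localizes $\mathcal{R}$ to concrete source contexts and data objects so that a human, or a follow-on compiler pass, can separate genuine algorithmic and data-structural waste from unavoidable re-reads. So the honest form of the claim is: a high $\mathcal{R}$ deductively implies repeated computation, and empirically---once that computation is attributed to a scope---it is, in the programs we study, an inefficiency with a concrete fix.
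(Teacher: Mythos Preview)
Your proposal is correct in spirit and actually goes further than the paper does. The paper does not attempt any deductive argument for this Observation: immediately after stating it, the paper explicitly concedes that ``redundant loads are neither a necessary condition nor a sufficient condition to capture all kinds of software inefficiencies,'' and then supports the Observation purely empirically---by measuring $\mathcal{R}$ across benchmark suites and walking through the taxonomy of case studies in \S\ref{subsec:input}--\S\ref{subsec:compiler} (input-sensitive redundancy, suboptimal data structures and algorithms, missed compiler optimizations), later reinforced by the speedups in \S\ref{sec:use}. There is no data-dependence-chain argument, no formal link from a redundant load $L_2$ to redundant downstream arithmetic; the paper treats the statement as an \emph{observation} in the literal sense, validated by examples.

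Your two-pronged plan therefore differs in that your first prong---the deductive plausibility argument via transitive operands---is your own addition. It is a reasonable strengthening and it does no harm, but be aware that the paper neither makes nor needs it; the Observation is carried entirely by the empirical leg, which you also include. Your enumeration of provenances and your caveat that a high $\mathcal{R}$ need not be actionable both match the paper's own hedging. If you want to align with the paper, you can drop the data-dependence argument and simply present the case-study evidence; if you keep it, note that it is a motivation you are supplying beyond what the paper offers.
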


Redundant loads are neither a necessary condition nor a sufficient condition to capture all kinds of software inefficiencies.
However, we show, with many illustrative case studies, that \emph{a large fraction of redundant loads in the same code region} is often a symptom of a serious inefficiency. 
We notice frequent redundant loads across the board in many programs irrespective of optimization levels, raising a warning alarm of potential inefficiency. Although not all redundant loads demand optimization, in our experience, investigating the top few contributors in a profile offers a high potential to tune and optimize code. Looking for load redundancy opens potentially an easy avenue for code optimization---manual or automatic.  

We measure the redundancy fraction in a number of benchmarks SPEC CPU2006~\cite{SPEC:CPU2006}, PARSEC-2.1~\cite{parsec}, Rodinia-3.1~\cite{rodinia}, and NERSC-8~\cite{TRINITY-WWW}. 
We compile these benchmarks with \texttt{gcc-4.8.5 -O3}, link-time optimization (LTO) and profile-guided optimization (PGO), which is one of the highest optimization levels.
In practice, most packages do not use this level of optimization.
 %the typical setting of these benchmarks.

We observe that a large load redundancy fraction correlates with some kind of inefficiency. %, which results in excessive redundant loads. 
Furthermore, the code that generates many redundant loads is responsible for the inefficiencies in the program.
We classify the causes of redundant loads according to their provenance: input-sensitive redundant loads, inefficient data structure/algorithm designs, or missing compiler optimizations. 
Different kinds of inefficiencies require different optimization strategies.  

\subsection{Input-sensitive Redundant Loads}
\label{subsec:input}

In this section, we classify the inefficiency due to inputs.
Rodinia-3.1 backprop~\cite{rodinia}, a supervised machine learning algorithm, trains the weights of connections in a neural network. The redundancy fraction of this program is 64\%.
It is common knowledge that as the training progresses, many weights stabilize and do not change. Hence, their gradients become and remain zero.
Listing~\ref{lst:backprop} shows the inefficiency at line 3, where the majority of elements in arrays \texttt{delta} and \texttt{oldw} are zeros. 
Computations at lines 3-5 can be bypassed when \texttt{delta[j]} and \texttt{oldw[k][j]} are zeros. 
Repeatedly loading the zero value from \texttt{delta[j]} and \texttt{oldw[k][j]} shows up as spatial load redundancy.
It is easy to eliminate the input-sensitive redundant loads by predicating the subsequent computation on the values of \texttt{delta[j]} and \texttt{oldw[k][j]} being non-zero.

\begin{figure}[t]
\begin{lstlisting}[firstnumber=1,language=c, caption= {Spatial load redundancy in Rodinia-3.1 backprop. Arrays \texttt{delta} and \texttt{oldw} are repeatedly loaded from memory whereas most array elements are zero.}\vspace{-1.5em}, label=lst:backprop]
for (j = 1; j <= ndelta; j++) {
  for (k = 0; k <= nly; k++) {
@$\blacktriangleright$@   new_dw = ((ETA*delta[j]*ly[k])+(MOMENTUM*oldw[k][j]));
     w[k][j] += new_dw;
     oldw[k][j] = new_dw;
  }}
\end{lstlisting}
\end{figure}

\subsection{Redundant Loads due to Suboptimal Data Structures and Algorithms}
\label{subsec:algorithm}

Inefficiencies of this category require semantics to identify and optimize. These inefficiencies also incur a significant number of redundant loads. We illustrate some algorithms that introduce inefficiencies in a few well-known benchmarks.

\paragraph{\textbf{Linear search}}
Rodinia-3.1 particlefilter~\cite{rodinia} is used to estimate the location of a target object in signal processing and neuroscience. The redundancy fraction of this program is 99\%. Listing~\ref{lst:particlefilter} shows the inefficiency in function \texttt{findIndex}, which performs a linear search (line 3) over a sorted array \texttt{CDF} to determine the location of a given particle. 
This linear search is called multiple times in a loop to become the bottleneck of the program. 
The symptom of this inefficiency is many redundant loads, which is caused by the repeated loads of immutable array \texttt{CDF} elements in different invocation instances of function \texttt{findIndex}. To fix this problem, one can replace the linear search with a binary search, which reduces the volume of redundant loads.

\begin{figure}
\begin{lstlisting}[firstnumber=1,language=c, caption= Temporal load redundancy in Rodinia-3.1 particlefilter. A linear
search loads the same values from the same memory locations.\vspace{-1em}, label=lst:particlefilter]
int findIndex(double *CDF, int lengthCDF, double value) {
  for(x = 0; x < lengthCDF; x++) {
@$\blacktriangleright$@   if (CDF[x] >= value) {
      index = x; break;
     }}
  ...
  return index;
}
...
for(j = 0; j < Nparticles; j++)
  i = findIndex(CDF, Nparticles, u[j]); 
\end{lstlisting}
\end{figure}

\paragraph{\textbf{Hash table}}

Parsec-2.1 dedup~\cite{parsec} compresses data via deduplication. The redundancy fraction of this program is 75\%. Listing~\ref{lst:dedup} shows the inefficiency in the program, which searches for an item in a linked list associated with a hash table entry.
The inefficiency comes from the frequent execution on the slow path due to the hash collision. 
We noticed that only $\sim$2\% hash buckets are occupied, and the slow path is frequently taken.
The linked list traversal on the slow path loads the same values from the same locations (line 8), which results in redundant loads.
One can improve the hash function to make hash keys uniformly distributed among buckets, which will reduce the redundancy and hence the inefficiency.
 
\begin{figure}[t]
\begin{lstlisting}[firstnumber=1,language=c, caption= Temporal load redundancy in Parsec-2.1 dedup. Excessive hash collisions in linear hashing result in long linked lists.\vspace{-1.5em}, label=lst:dedup]
struct hash_entry *hashtable_search(struct hashtable *h, void *k) {
  struct hash_entry *e;
  unsigned int hashvalue, index;
  hashvalue = hash(h,k);
  index = indexFor(h->tablelength,hashvalue);
  e = h->table[index];
  while (NULL != e) {
@$\blacktriangleright$@  if ((hashvalue == e->h) && (h->eqfn(k, e->k))) return e;
    e = e->next;  
  } ...}
\end{lstlisting}
\end{figure}

\subsection{Redundant Loads due to Missing Compiler Optimizations}
\label{subsec:compiler}

Inefficiencies of this category occur in small scopes---loop nests or procedure calls.
One needs to either curate the code or manually apply transformations to eliminate these inefficiencies. The following three examples illustrate our findings.

\paragraph{\textbf{Missing scalar replacement}} 

Rodinia-3.1 hotspot 3D~\cite{rodinia} is a thermal simulation program that estimates processor temperature. The redundancy fraction of this program is 95\%. Listing~\ref{lst:hotspot3D} shows a loop nest that performs a stencil computation. 
At line 8, \texttt{tOut\_t[c]} is updated with the values in nearby \texttt{tIn\_t[]}. 
Typically, \texttt{w} $=$ \texttt{c} - \texttt{1} and \texttt{e} $=$ \texttt{c} + \texttt{1}.
As a result, the value of \texttt{tIn\_t[e]} in the current iteration equals the value of \texttt{tIn\_t[c]} in the next iteration and further equals the value of \texttt{tIn\_t[w]} in the iteration after the next. 
However, the compiler does not perform register promotion of \texttt{tln\_[e]}. 
Hence, many \emph{redundant loads} occur in this loop nest. 
To fix this inefficiency, we employ the scalar replacement to eliminate inter-iteration redundant loads from memory. Specifically, we store the value of \texttt{tIn\_t[e]} in a local variable in the current iteration to be reused by \texttt{tIn\_t[c]} in the next iteration and by \texttt{tIn\_t[w]} in the iteration after the next.
 
\begin{figure}[t]
\begin{lstlisting}[firstnumber=1,language=c, caption= Temporal load redundancy in Rodinia-3.1 hotspot3D. Array \texttt{tIn\_t} is repeatedly loaded from memory while the values remain unchanged.\vspace{-1em}, label=lst:hotspot3D]
for(y = 0; y < ny; y++) {
  for(x = 0; x < nx; x++) {
    int c, w, e, n, s, b, t;
    c = x + y * nx + z * nx * ny;
    w = (x == 0) ? c : c - 1;
    e = (x == nx - 1) ? c : c + 1;
    ...
@$\blacktriangleright$@   tOut_t[c] = cc*tIn_t[c]+cw*tIn_t[w]+ce*tIn_t[e]+...
  }}
\end{lstlisting}
\end{figure}

\paragraph{\textbf{Missing constant propagation}} 

NERSC-8 msgrate~\cite{TRINITY-WWW} measures the message passing rate via the MPI interface. The redundancy fraction of this program is 97\%. Listing~\ref{lst:msgrate} shows a procedure \texttt{cache\_invalidate}, which sets all the elements in array \texttt{cache\_buf} to 1. 
This code adopts a suboptimal forward propagation that loads the value of \texttt{cache\_buf[i-1]} and assigns it to \texttt{cache\_buf[i]}. 
Although there is no redundant load in one invocation of this function, procedure \texttt{cache\_invalidate} is called in a loop (not shown in the listing), resulting in excessive, redundant loads from array \texttt{cache\_buf}. 
The compiler does not replace the assignment with a constant, possibly due to its inability to prove the safety of assigning to a global array in the presence of concurrent threads of execution. 

%The redundant loads occur at line 7, shown in Listing~\ref{lst:msgrate}. With further analysis, we find that the function cache\_invalidate is frequently called inside a loop (not shown). On each function call, all elements of the array \texttt{cache\_buf} are assigned to 1 by forward propagation, resulting in loading elements of the array \texttt{cache\_buf} from memory, repeatedly. The root cause of such redundancy is that compilers fail to recognize the value of \texttt{cache\_buf[0]} is a constant and forward this constant to other elements of array \texttt{cache\_buf} at compile time. To fix compilers' failings, one can assign 1 to each 
%element of the array\texttt{cache\_buf}.

\begin{figure}[t]
\begin{lstlisting}[firstnumber=1,language=c, caption=  Temporal load redundancy in NERSC-8 msgrate. The program repeatedly  loads a constant ``1'' from array \texttt{cache\_buf}.\vspace{-1.5em}, label=lst:msgrate]
int *cache_buf;
...
static void cache_invalidate(void) {
  int i;
  cache_buf[0] = 1;
  for (i = 1; i < cache_size; ++i) 
@$\blacktriangleright$@   cache_buf[i] = cache_buf[i-1];
}
\end{lstlisting} 
\end{figure}

\begin{figure}[t]
\begin{lstlisting}[firstnumber=1, language=c, caption= Temporal load redundancy in SPEC CPU2006 464.h264ref due to missing function inlining.\vspace{-1.5em}, label=lst:h264ref]
for (pos = 0; pos < max_pos; pos++) {
  ...
  if(abs_y >= 0 && abs_y <= max_height && ...) 
    PelYline_11 = FastLine16Y_11; 
  else PelYline_11 = UMVLine16Y_11;
  for (blky = 0; blky < 4; blky++) {
    for (y = 0; y < 4; y++) {
@$\blacktriangleright$@     refptr = PelYline_11(ref_pic, abs_y++, abs_x, img_height, img_width);
      ... 
    } ...}}
\end{lstlisting}
\end{figure}

\paragraph{\textbf{Missing inline substitution}} 

SPEC CPU2006 464.h264ref~\cite{SPEC:CPU2006} is a reference implementation of H.264, a standard of video compression. The redundancy fraction of this program is 84\%. 
The compiler fails to inline the frequently called function \texttt{PelYline\_11} at line 8 shown in Listing~\ref{lst:h264ref}. Because it is invoked via a function pointer and the callee routines are not present in the same file.
The parameters of \texttt{PelYline\_11}---\texttt{abs\_x}, \texttt{img\_height}, and {\texttt{img\_width}---are unmodified across multiple successive invocations. 
In each invocation, the caller pushes the same parameters on the same stack, and then the callee loads the same values from the same location, which show up as redundant loads.
To fix the problem, we need to manually inline the function~\cite{redspy}. 

\paragraph{\textbf{Discussion}}
We have explored other compiler flags that enable advanced optimization such as polyhedral optimization~\cite{graphite-www} in \texttt{GCC}. 
Unfortunately, the polyhedral optimization was unsuccessful in optimizing any of the aforementioned scenarios.
%\red{It is worth noting that building \texttt{hotspot3D} with LTO (\texttt{-flto}) takes 90 seconds, while with LTO and Polly  both enabled (\texttt{ -flto -floop-interchange  -floop-strip-mine -floop-block -ftree-loop-linear -fgraphite-identity  -floop-parallelize-all -ftree-parallelize-loops=4}) takes more than two hours.}
Furthermore, we observed that using LTO, PGO, together with the polyhedral optimization made compilation time extremely high for some cases.
For example, it took over two hours to compile hotspot 3D, a 30,000$\times$ slowdown compared to simply using \texttt{-O3}.
As a result, our later evaluation section  does not use LTO and polyhedral optimization, but only uses \texttt{-O3} with PGO.
We leave the effectiveness of other compilers such as LLVM~\cite{Lattner:2004:LCF:977395.977673} and ICC~\cite{ICC-WWW} on the same set of programs for a future study.

\section{\loadspy{} Implementation}
\label{sec:methodology}

\loadspy{} employs Intel \texttt{Pin}~\cite{Luk:2005:PBC:1065010.1065034} to intercept every memory load operation.
The instrumentation obtains the effective address $M$ to be accessed in the instruction, the access length $\delta$, and offers the pair to a runtime analysis routine. 
In the rest of this section, we discuss how \loadspy{} identifies \emph{temporal} and \emph{spatial} load redundancies, respectively.

\subsection{Detecting Temporal Load Redundancy}
\label{subsec:temporalRed}

Detecting temporal load redundancy requires two pieces of information: the current value $v_{new}$ at the target location and the last-time loaded value $v_{old}$ from the same location.
The runtime analysis routine, run just before the execution of the original program's load instruction, fetches the current value $v_{new}$ at the memory range $[M:M+\delta)$.
\loadspy{} employs a shadow memory $S$ for maintaining the last-time loaded value at the same location.
 $S[M]$ maintains the value last loaded by the program at location $M$.
\loadspy{} utilizes the page-table-based scheme~\cite{chabbi2012deadspy} to efficiently manage its shadow memory. 
At runtime, the analysis routine fetches $v_{old}$ from $S[M:M+\delta)$ and $v_{new}$ from $[M:M+\delta)$.
\loadspy{} records an instance of a \emph{redundant} load if $v_{old} = v_{new}$.
All bytes must match to qualify a load as redundant. 
Intuitively, sub-read-size redundancy is not actionable by the programmer. 
Note, however, that $v_{old}$ might have been generated by multiple shorter reads, a single longer read, or more commonly a single read of the same size. 
If not redundant, \loadspy{} updates the shadow memory with the newly loaded value.
Also, \loadspy{} records an instance of a \emph{non-redundant} load if $v_{old} \ne v_{new}$.

\loadspy{} provisions for approximate computation by allowing the new value generated in a floating-point (FP) operation to \emph{approximately} match the previously present value. 
If the two values are within a threshold of difference, \loadspy{} considers them approximately equal and records an instance of a redundant load.
The threshold is tunable; we use 1\% in our experiments.
Accordingly, \loadspy{} decomposes the load redundancy into \emph{precise} and \emph{approximate}. 

\loadspy{} attributes each instance of redundant loads (and non-redundant loads) to two parties $\langle C_{old}, C_{new}\rangle$ involved in two operations, where $C_{old}$ is the calling context of the previous load operation on $M$ and $C_{new}$ is the calling context of the current load operation on $M$. 

The following equations compute the fraction of temporal load redundancy in an execution:
\begin{eqnarray}
\scriptsize
\begin{aligned}
{\mathcal R}_{prog}^{precise} =& {\sum_i\sum_j\text{Redundant non-FP bytes loaded in } \langle C_{i}, C_{j}\rangle \over \sum_i\sum_j\text{non-FP bytes loaded  in } \langle C_{i}, C_{j}\rangle} \\
{\mathcal R}_{prog}^{approx} =& {\sum_i\sum_j\text{Redundant FP bytes loaded  in } \langle C_{i}, C_{j}\rangle \over \sum_i\sum_j\text{FP bytes loaded  in } \langle C_{i}, C_{j}\rangle} \\
\end{aligned}
\end{eqnarray}

Load redundancy between a pair of calling contexts is given by the following equations:
\begin{eqnarray}
\scriptsize
\begin{aligned}
{\mathcal R}_{\langle C_{old}, C_{new}\rangle}^{precise} =& {\text{Redundant non-FP bytes loaded  in } \langle C_{old}, C_{new}\rangle \over \sum_i\sum_j\text{non-FP bytes loaded  in } \langle C_{i}, C_{j}\rangle}  \\
{\mathcal R}_{\langle C_{old}, C_{new}\rangle}^{approx} =& {\text{Redundant FP bytes loaded  in } \langle C_{old}, C_{new}\rangle \over \sum_i\sum_j\text{FP bytes loaded  in } \langle C_{i}, C_{j}\rangle}  \\
\end{aligned}
\end{eqnarray}
The metrics help identify code regions (pairs of calling contexts) where the highest amount of redundancy is observed.

\textbf{\textit{Obtaining the Runtime Calling Context of an Instruction:}}
\label{subsec:programCtxt}
Attributing runtime statistics to a flat profile (just an instruction pointer) does not offer full insights for developers.
For example, attributing redundant loads to a common library function, e.g., \texttt{strcmp}, offers little insight since \texttt{strcmp} can be invoked from several places in a large code base; some invocations may not even be obvious to the user code.
A detailed attribution demands associating profiles to the full calling context: \texttt{main():line->A():line->...}\texttt{->strcmp():line}.
\loadspy{} requires obtaining the calling context on each load operation since each load---redundant or not.
\loadspy{} employs CCTLib~\cite{Chabbi:2014:CPP:2581122.2544164}, which efficiently maintains calling contexts as a calling context tree (CCT)~\cite{Ammons:1997:EHP:258915.258924} including complex control flows through \texttt{longjump}, tail calls, and exceptions. 
The calling context, which is provided as a unique 32-bit integer, is recorded (in addition to the last-time loaded value) in the shadow memory.

\subsection{Detecting Spatial Load Redundancy}
\label{subsec:spatialRed}

\begin{figure}[t]
\begin{center}
\includegraphics[width=0.45\textwidth]{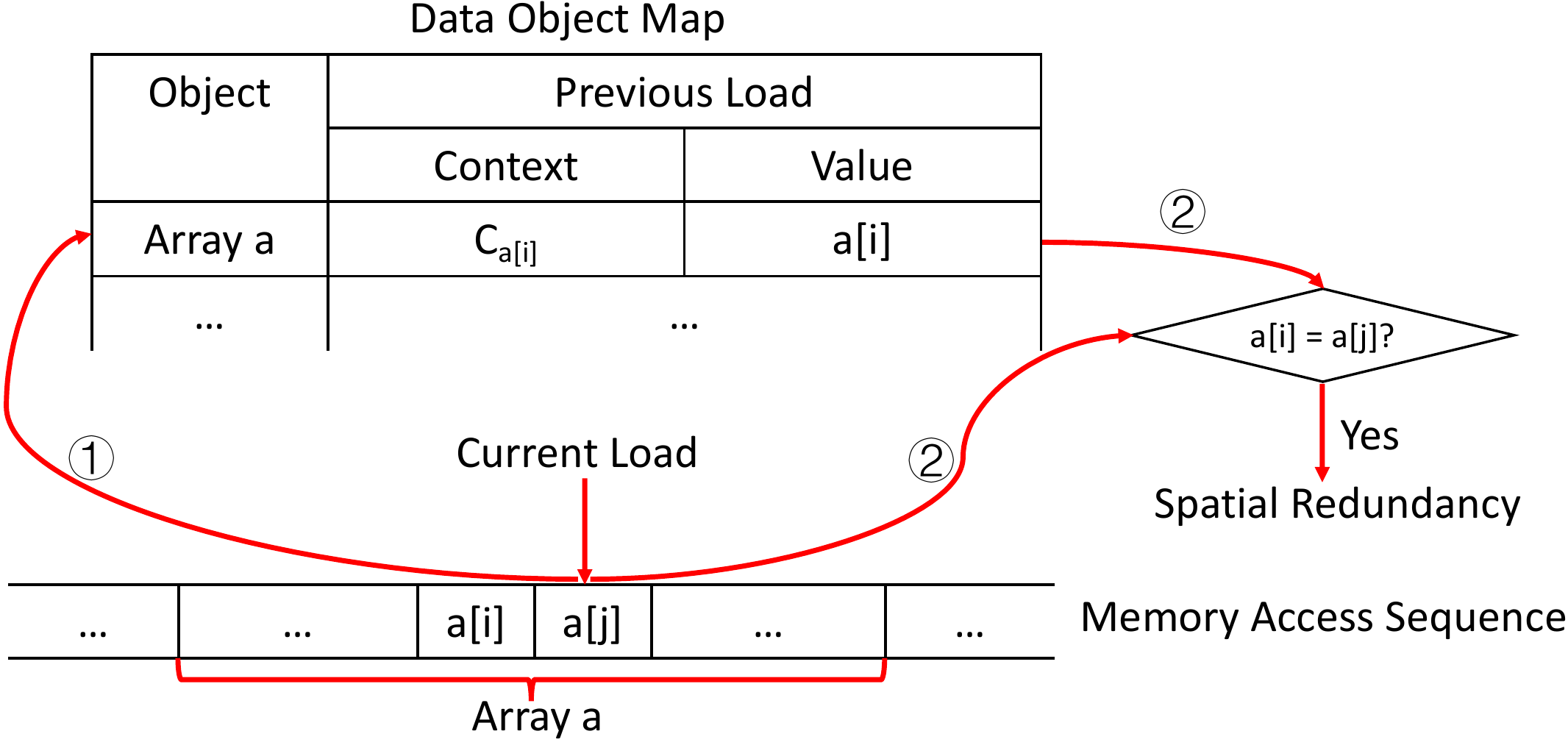}
\end{center}
\vspace{-0.15in}
\caption{Detecting spatial load redundancy. \textcircled{1} \loadspy{} monitors a load operation and associate its effective address with the data object. In a map, each data object associates itself with the value and context of the previous load belonging to this data object. \textcircled{2} \loadspy{} compares the previous and current load values; if they are (approximately) the same, an instance of (approximate) spatial load redundancy is reported. \textcircled{3} The value and context associated with the data object are updated with the ones from the current load.}
\vspace{-1.5em}
\label{fig:spatialRed}
\end{figure}

For arrays and aggregate objects, \loadspy{} checks whether two consecutive loads from any element of the same object load (approximately) the same value. 
For example, if two consecutive loads from an array \texttt{a}, say \texttt{a[i]} and \texttt{a[j]}, load the same value, \loadspy{} flags it as an instance of \emph{spatial} load redundancy and attributes it to the same data object, as shown in Figure~\ref{fig:spatialRed}.

To facilitate spatial load redundancy detection, \loadspy{} maintains a mapping from address ranges to active data objects in a shadow memory. 
Associated with each data object $\mathcal{O}$ is two additional pieces of information: a singleton value $v_{old}$ loaded as a result of the previous load operation performed on $\mathcal{O}$ and the calling context $C_{old}$ associated with the previous load operation performed on $\mathcal{O}$. 
Upon each memory load, \loadspy{} uses the effective address of the load operation to look up the data object it belongs to in the map.
If the value of the current load matches the one recorded with the previous load on the same object, \loadspy{} records an instance of spatial load redundancy. 
The redundancy is hierarchically attributed first to the data object involved and then to the two calling contexts involved in the redundancy.

\loadspy{} provides the similar whole-program and per-redundancy-pair metrics for spatial redundancy. 
Moreover, \loadspy{} computes the per-data-object metrics with the following equations where ${\mathcal O}$ is a data object.
\begin{eqnarray}
\scriptsize
\begin{aligned}
{\mathcal R}_{\mathcal O}^{precise} =& {\text{Redundant non-FP bytes in object }  {\mathcal O}  \over \sum_i\text{non-FP bytes in object i}} \\
{\mathcal R}_{\mathcal O}^{approx} =& {\text{Redundant FP bytes in object }  {\mathcal O}  \over \sum_i\text{FP bytes in object i}} \\
\end{aligned}
\end{eqnarray}

\textbf{\textit{Obtaining Data-object Addresses at Runtime:}}
\loadspy{} monitors static and dynamic data objects but ignores stack objects from spatial redundancy detection.
Data allocated in the \texttt{.bss} section in a load module are static objects. 
Each static object has a named entry in the symbol table that identifies the memory range for the object with an offset from the beginning of the load module. 
The lifetime of static objects begins when the enclosing load module (executable or dynamic library) is loaded into memory and ends when the load module is unloaded. 
\loadspy{} intercepts the loading and unloading of load modules to monitor the lifetime of static data objects and establishes a mapping from an object's address range to the corresponding data object.

Dynamic objects are allocated via one of malloc family of functions (\texttt{malloc}, \texttt{calloc}, \texttt{realloc}) and \texttt{mmap}~\cite{Liu:2013:DPP:2503210.2503297}. 
The memories for dynamic objects are reclaimed at \texttt{free} and \texttt{munmap}. 
\loadspy{} intercepts these functions to establish a mapping from an object's address range to the corresponding data object.
Querying an address at runtime obtains a handle to the corresponding static or dynamic object. The handle is a unique identifier representing the object name for a static object or the allocation calling context for a dynamic object. 
 
\subsection{Identifying the Redundancy Scope}
\label{subsec:redScope}

When the redundancy happens in the same calling context, that is $C_{old} = C_{new}$, there is guaranteed to be a loop~\footnote{We consider natural loops~\cite{Torczon:2007:EC:1526330} only.} around the redundancy location.
However, in code with nested loops, it is unclear whether the redundancy occurred between iterations of an inner loop or between iterations of an outer loop or some other loop in-between.
Hence, it becomes necessary to point out the syntactic scope enclosing a redundancy pair.

We illustrate the need for scope using a real-world application \texttt{MASNUM-2.2}~\cite{Qiao:2016:HEG:3014904.3014911} shown on the left of Listing~\ref{lst:motivationExample}. \loadspy{} identifies 91\% of memory loads are redundant and the top contributor is at line 6.
It is tempting to infer that \texttt{x(iii+1)} loaded in one iteration of the inner \texttt{do} loop (line 5) is loaded again as \texttt{x(iii)} in the next iteration.
An obvious optimization is to perform scalar replacement to retain \texttt{x(iii+1)} across iterations of the inner \texttt{do} loop (on the right of Listing~\ref{lst:motivationExample}).
However, this optimization does not eliminate many redundant loads.
Actually, the outer \texttt{do} loop at line 1 repeatedly searches for an item \texttt{xx}, and the inner \texttt{do} loop performs a linear search.
As a result, the inner loop repeatedly loads the same set of elements across two trips of the outer loop. 
Thus, the load redundancy exists not only between iterations of the inner loop but also between iterations of the outer loop.
The load redundancy at the outer loop highlights an algorithm-level inefficiency---repeated linear searches. 
With this knowledge, we can replace the linear search with a binary search.
More details are shown in \S~\ref{subsec:masnum}.

To assist developers to focus on the \emph{scope} where load redundancy occurs, we have incorporated a \emph{redundancy scope} feature in \loadspy{}. 
We denote redundancy scope with the symbol $\mathcal{S}$.
In Listing~\ref{lst:motivationExample}, the redundancy scope is the \emph{outer} \texttt{do} loop.
%\loadspy{} associates a redundancy scope for each redundancy pair, which is the context (i.e., procedure or loop) enclosing the redundancy pair. 
Below we detail how redundancy scope is computed.

\begin{figure}
\begin{minipage}[t]{.48\linewidth}
\begin{lstlisting}[firstnumber=1,language=c]
do 500 k=1, kl
  ...
  xx=x0-deltt*(cgx+ux(ia,ic))/rslat(ic)*180./pi
  ...
  do iii = ixs, ixl-1
  @$\blacktriangleright$@ if(xx >= x(iii) .and. xx <= x(iii+1)) then
      ixx = iii; exit
    endif
  enddo
  ...
500 continue
\end{lstlisting}
%\subcaption{Original code}\label{lst:motivationExample1}
\end{minipage}\hfill
\begin{minipage}[t]{.48\linewidth}
\begin{lstlisting}[firstnumber=1,language=c]
do 500 k=1, kl
  scalar = x(ixs)
  do iii = ixs, ixl-1
    if(xx >= scalar) then
      scalar = x(iii+1)
      if (xx <= scalar) then 
        ixx = iii; exit
      endif
    else scalar = x(iii+1)
    endif
  enddo
  ...
500 continue
\end{lstlisting}
%\subcaption{Optimized code using scalar replacement}
\end{minipage}
\vspace{-0.3in}
\captionof{lstlisting}{A code example (on the left) from MASNUM-2.2~\cite{Qiao:2016:HEG:3014904.3014911} that requires additional information for disambiguating the scope of load redundancy. 
Many redundant loads occur at line 6 where the array \texttt{x} is repeatedly loaded from memory. If we only focus on the inner loop, we would be misled to believe the stencil computation, which loads \texttt{x(iii+1)} and \texttt{x(iii)}, causes many redundant loads across iterations of the inner loop. However, performing scalar replacement (on the right) does not yield much speedup. An algorithmic-level redundancy happens in the outer \texttt{do} loop, which repeatedly performs linear searches for a sorted array of elements.
}
\vspace{-1em}
\label{lst:motivationExample}
\end{figure}

We first extend calling contexts to incorporate loop information.
Thus, the calling context of a load operation looks as follows: $main()\to loop_{1}\to f()\to ...\to loop_{n} \to load_{old}$. 
Additionally, \loadspy{} maintains a 64-bit global timestamp counter $\mathcal{T}$ that is incremented when passing through each loop header and also through each load operation.
Thus, the calling context snapshot may appear as follows: $C_{old} = main()\to loop_{1}[\mathcal{T}=1]\to f()\to ...\to loop_{n}[\mathcal{T}=9]  \to load_{old}$.
We extend the calling context $E$ to be a tuple, that is, $E_{old}= \langle pointer\ to\ old\ context, \mathcal{T}_{old}\rangle$ = $\langle C_{old},10\rangle$. 

\begin{figure}[t]
\begin{minipage}[t]{.48\linewidth}
\begin{lstlisting}[firstnumber=1,language=c, caption=Redundancy in the inner loop scope., label=lst:scope1]
main () {
  // loop1
  for (i=0; i<M; i++) { 
    // loop2
    for (k=0; k<N; k++) { 
      // load from B[i] 
      t += B[i];
}}}
\end{lstlisting}
\end{minipage}\hfill
\begin{minipage}[t]{.48\linewidth}
\begin{lstlisting}[firstnumber=1,language=c, caption=Redundancy in the outer loop scope., label=lst:scope2]
main () {
  // loop1
  for (i=0; i<M; i++) { 
    // loop2
    for (k=0; k<N; k++) { 
      // load from A[k]
      t += A[k];
}}}
\end{lstlisting}
\end{minipage}
\vspace{-1.5em}
\end{figure}

Listing~\ref{lst:scope1} shows a simplified example, where the redundancy happens in the inner loop (scope is $loop_2$).
In this setting, consider the following  pair of calling context snapshot:
\begin{eqnarray*}
\scriptsize
\begin{aligned}
E_{old} =&\langle main()\to loop_{1}[\mathcal{T}=1] \to loop_{2}[\mathcal{T}=2]   \to load_{old}, \mathcal{T}_{old}=3 \rangle \\
E_{new} =&  \langle main()\to loop_{1}[\mathcal{T}=1] \to loop_{2}[\mathcal{T}=4]  \to load_{new}, \mathcal{T}_{new}=5 \rangle
\end{aligned}
\end{eqnarray*}

Notice that the counter associated with $loop_{1}$ has remained unchanged whereas the counter associated with $loop_{2}$ has changed.
Each load maintains a \emph{pointer} to the calling context, not the entire calling context snapshot.
Hence, by the time the redundancy is detected, that is, $load_{new}$ is executed, $loop_{2}[\mathcal{T}=2]$ would have gotten updated to $loop_{2}[\mathcal{T}=4]$; traversing $C_{old}$ would find $\mathcal{T}_{loop_{2}} = 4$.
Observe that $\mathcal{T}_{old} < \mathcal{T}_{loop_{2}} < \mathcal{T}_{new}$. 
This invariant informs that $loop_{2}$ is the scope inside which the redundancy is happening.
The same invariant does not hold for $\mathcal{T}_{loop_{1}}$.

Now, consider a simplified example in Listing~\ref{lst:scope2}, where redundancy happens in the outer loop (scope is $loop_{1}$).
In this setting, consider the following  pair of calling context snapshot:
\begin{eqnarray*}
\scriptsize
\begin{aligned}
E_{old} =& \langle main()\to loop_{1}[\mathcal{T}=1] \to loop_{2}[\mathcal{T}=2]  \to load_{old}, \mathcal{T}_{old}=3 \rangle \\
E_{new} =&  \langle main()\to loop_{1}[\mathcal{T}=8] \to loop_{2}[\mathcal{T}=9]  \to load_{new}, \mathcal{T}_{new}=10 \rangle
\end{aligned}
\end{eqnarray*}

Notice that the counter associated with both $loop_{1}$ and $loop_{2}$ have changed.
Hence, by the time $load_{new}$ is executed, $loop_{1}[\mathcal{T}=1]$ and  $loop_{2}[\mathcal{T}=2]$ would have gotten updated to $loop_{1}[\mathcal{T}=8]$ and $loop_{2}[\mathcal{T}=9]$, respectively;  traversing $C_{old}$ would find $\mathcal{T}_{loop_{1}} = 8$ and  $\mathcal{T}_{loop_{2}} = 9$.
Observe that $\mathcal{T}_{old} < \mathcal{T}_{loop_{1}} < \mathcal{T}_{loop_{2}} < \mathcal{T}_{new}$.
The loop with the smallest $\mathcal{T}$ value obeying this invariant, that is $loop_{1}$, is the redundancy scope.
If there was another enclosing loop, say $loop_{0}$, its counter would not have obeyed this invariant.

%\vspace{-1.5em}
\begin{claim}
\label{claim:scope}
Given a redundancy context pair $\langle \langle C, \mathcal{T}_{old}\rangle , \langle  C, \mathcal{T}_{new} \rangle \rangle$,
the redundancy scope $\mathcal{S}$ is the outermost enclosing loop $i$ in $C$ such that $\mathcal{T}_{old} < \mathcal{T}_{loop_i}  < \mathcal{T}_{new}$.
%the redundancy scope $\mathcal{S}$ is the loop $i$ in $C$ with the smallest $\mathcal{T}$ value such that $\mathcal{T}_{old} < \mathcal{T}_{loop_i} < \mathcal{T}_{loop_j} < \cdots  < \mathcal{T}_{loop_n}  < \mathcal{T}_{new}$. 
\end{claim}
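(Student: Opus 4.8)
The plan is to reason directly from the semantics of the global timestamp $\mathcal{T}$: it increases strictly over the run, it is bumped exactly when control crosses a loop header or executes a load, and the node for a loop in the stored calling context always holds the timestamp of the \emph{most recent} crossing of that header. Since the pair has $C_{old}=C_{new}=C$, both $load_{old}$ and $load_{new}$ execute at the same site with control simultaneously nested inside the same chain of enclosing loops, which I write outermost-to-innermost as $loop_{j_1}\prec\cdots\prec loop_{j_m}$. Two easy observations get recorded up front: all timestamps in play ($\mathcal{T}_{old}$, $\mathcal{T}_{new}$, and each $\mathcal{T}_{loop_{j_k}}$) come from distinct events and are therefore pairwise distinct, so ``the outermost loop obeying a strict inequality'' is unambiguous; and monotonicity of $\mathcal{T}$ with $load_{old}$ preceding $load_{new}$ gives $\mathcal{T}_{old}<\mathcal{T}_{new}$. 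The first real step is to note that the upper half of the invariant is free: for \emph{every} enclosing loop $loop_{j_k}$, the descent that ended in $load_{new}$ crossed that loop's header before reaching the load, so its stored timestamp satisfies $\mathcal{T}_{loop_{j_k}}<\mathcal{T}_{new}$. Hence $\mathcal{T}_{old}<\mathcal{T}_{loop_i}<\mathcal{T}_{new}$ is equivalent to the single condition $\mathcal{T}_{loop_i}>\mathcal{T}_{old}$, i.e. ``$loop_i$ was re-entered (its header crossed) at least once strictly between the two loads.''

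Next I would show that the re-entered loops form an inner suffix of the chain. Let $K=\{k:\mathcal{T}_{loop_{j_k}}>\mathcal{T}_{old}\}$; the goal is $K=\{k_0,k_0+1,\dots,m\}$ for some $k_0$. If $k\in K$, the last crossing of $loop_{j_k}$'s header occurred at time $t=\mathcal{T}_{loop_{j_k}}>\mathcal{T}_{old}$; from there control must descend to $load_{new}$, and because natural loops are single-entry it must cross, in order, the headers of $loop_{j_{k+1}},\dots,loop_{j_m}$---exactly the loops nested inside $loop_{j_k}$ that still enclose the load. Each of those crossings happens after $t$, so $\mathcal{T}_{loop_{j_\ell}}\ge(\text{that crossing's timestamp})>t>\mathcal{T}_{old}$ for every $\ell>k$, giving $k+1,\dots,m\in K$; thus $K$ is upward-closed. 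It is also nonempty: a load cannot re-execute at a fixed calling context unless some enclosing loop took a back-edge in between (again by single-entry, re-reaching a fixed point in a fixed context requires crossing that loop's header), so at least $loop_{j_m}$---or some enclosing loop---lies in $K$.

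Now set $k_0=\min K$. By the reduction above, $loop_{j_{k_0}}$ is an enclosing loop of $C$ with $\mathcal{T}_{old}<\mathcal{T}_{loop_{j_{k_0}}}<\mathcal{T}_{new}$, and by upward-closedness of $K$ it is the outermost such loop, hence exactly the loop named in the claim. Finally I would check that this loop really is the redundancy scope: the loops $loop_{j_1},\dots,loop_{j_{k_0-1}}$ lie outside $K$, so their headers were never crossed between the two loads, meaning $load_{old}$ and $load_{new}$ occur within the same iteration of every one of those outer loops; whereas $loop_{j_{k_0}}\in K$, so its header \emph{was} crossed, placing the two loads in distinct iterations of $loop_{j_{k_0}}$. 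Therefore $loop_{j_{k_0}}$ is precisely the outermost loop across whose iterations the redundant pair straddles---which is what ``redundancy scope'' means, and what Listings~\ref{lst:scope1} and~\ref{lst:scope2} and the MASNUM example are meant to illustrate.

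The step I expect to be the main obstacle is the ``forced descent'' used in the suffix argument: making rigorous that once control re-enters an outer loop's header it is obliged to pass through every intervening inner-loop header before it can reach the load again. I would ground this in two standing structural assumptions---that every loop considered is a natural (single-entry) loop, which the excerpt already stipulates, and that the enclosing loops along a calling-context path are linearly nested in the CCT---and keep the argument at the level of event traces rather than spelling out a full operational semantics. A lesser nuisance is nonemptiness of $K$, dispatched by the same single-entry observation; and one must keep in mind that all relevant timestamps are distinct, so the inequalities in the claim are genuinely strict and ``outermost'' is well defined.
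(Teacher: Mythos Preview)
Your argument is correct and actually more careful than the paper's own proof. Both rest on the same key lemma---if $loop_a$ encloses $loop_b$ on the path $C$, then at the moment $load_{new}$ executes one has $\mathcal{T}_{loop_a}<\mathcal{T}_{loop_b}$, because any crossing of $loop_a$'s header is followed by a crossing of $loop_b$'s header before the load (your ``forced descent'')---but the two proofs package it differently. The paper argues by contradiction: it takes loop $i$ to be the semantic redundancy scope, notes its timestamp must lie in $(\mathcal{T}_{old},\mathcal{T}_{new})$, and then assumes some other loop $j$ with a strictly smaller in-range timestamp exists; the nesting/timestamp lemma rules out $j$ being inside $i$, and $j$ being outside $i$ contradicts $i$ being outermost. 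You instead work directly: you observe that the upper bound $\mathcal{T}_{loop_k}<\mathcal{T}_{new}$ is automatic for every enclosing loop, reduce the invariant to ``header crossed after $\mathcal{T}_{old}$,'' show this set $K$ is an upward-closed (inner) suffix of the nesting chain, and identify $\min K$ with the scope. Your route is more explicit about the operational assumptions (single-entry natural loops, linear nesting along $C$, distinctness of timestamps) that the paper uses tacitly, and it also handles the nonemptiness of $K$---a point the paper's proof does not mention. Either way the technical core is the same; your decomposition simply makes the structural content (``re-entered loops form an inner suffix'') visible, which is arguably the cleaner statement of why the claim holds.
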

%\vspace{-2em}

\begin{proof}
First, $\mathcal{T}_{loop_{i}}$ must be in the range of $(\mathcal{T}_{old}, \mathcal{T}_{new})$ because loop $i$ is the redundancy scope; otherwise, loop $i$ cannot enclose the redundant load instances. 
Next, assume there exists another loop $j$ in $C$ such that $\mathcal{T}_{old} < \mathcal{T}_{loop_{j}} < \mathcal{T}_{loop_i}<\mathcal{T}_{new}$ but loop $j$ is not the redundancy scope. 
Loop $i$ and $j$ cannot be the peer loops because they are both in the same context $C$. 
Then one loop must enclose  the other.
(1) If loop $i$ encloses loop $j$, $\mathcal{T}_{loop_i} < \mathcal{T}_{loop_{j}}$ because  loop $j$'s counter is incremented at least once after loop $i$'s counter is incremented, which contradicts the assumption that $\mathcal{T}_{loop_{j}} < \mathcal{T}_{loop_i}$. Hence, loop $j$ cannot be nested inside loop $i$.
(2) If loop $j$ encloses loop $i$, then loop $i$ is no longer the outermost loop with  $\mathcal{T}_{old} < \mathcal{T}_{loop_i}  < \mathcal{T}_{new}$. Hence, loop $j$ cannot be enclosing loop $i$.
Since loop $i$ and loop $j$ are neither peer loops, nor can they be nested within one another, the assumption is void.
Thus, Claim~\ref{claim:scope} holds.
\end{proof}

\textbf{\textit{Implementing Redundancy Scope:}}
\loadspy{} combines static and dynamic analysis to compute the redundancy scope $\mathcal{S}$ for each redundancy pair. 
First, \loadspy{} instruments each loop header in the binary (in addition to procedures) to produce calling contexts with augmented loop information.
It identifies an instruction as a loop header by performing an interval analysis~\cite{Havlak:1997:irreducible} on the binary code and integrates the information into the procedure call path.
We refer to the calling contexts with loop information as \emph{extended contexts}. 
A runtime analysis routine run as a part of each loop header increments the 64-bit timestamp counter $\mathcal{T}$.
The analysis routine run as a part of each load instruction also increments the counter $\mathcal{T}$.
Also, the shadow memory for each byte of the original program is extended to hold the counter $\mathcal{T}$ (in addition to the 32-bit calling context handle and the 8-bit old value).

On each detected load redundancy, where $C_{old} = C_{new}$, \loadspy{} searches the call path from root (\texttt{main)} toward the leaf (the \texttt{load} instruction) to look for the first loop node where the Claim~\ref{claim:scope} is found to be true.
Such a loop is the redundancy scope $\mathcal{S}$ for the current instance of load redundancy.
Each redundancy instance records the triplet $\langle C_{old}, C_{new}, \mathcal{S}\rangle$.
If $C_{old} \ne C_{new}$, \loadspy{} first finds the lowest common ancestor (LCA) function or loop enclosing $C_{old}$ and $C_{new}$, and then searches their common call path from root (main) toward the LCA to obtain $\mathcal{S}$ based on the Claim~\ref{claim:scope}.

Computing the redundancy scope for each redundancy instance introduces heavy runtime overhead. 
We compute the redundancy scope for a given calling context pair only a threshold number of times (one in our experiments), which is good enough for most programs.

\subsection{Handling Threaded Programs} 
\loadspy{} maintains per-thread data structures: calling context trees, redundancy profiles, $\mathcal{T}$, among others and hence needs no concurrency control for multi-threaded programs.
The runtime object map is maintained as a lock-free map allowing concurrent lookups.
\loadspy{} detects only intra-thread redundancy and ignores inter-thread redundancy, if any. 

\subsection{Reducing Profiling Overhead} 
 \loadspy{} can introduce relatively high runtime overhead, $\sim$40-150$\times$.
 \loadspy{} adopts a bursty sampling mechanism to control its overhead~\cite{Zhong:2008:SPL:1375634.1375648}. 
Bursty sampling involves continuous monitoring for a certain number of instructions (\texttt{WINDOW\_ENABLE}) followed by not monitoring for a certain (larger) number of instructions (\texttt{WINDOW\_DISABLE}) and repeating it over time. 
These two thresholds are tunable. 
From our experiments, 1\% sampling rate with \texttt{WINDOW\_ENABLE}=1 million and \texttt{WINDOW\_DISABLE}=99 million yields a good tradeoff between overhead and accuracy.
%It is worth noting that bursty sampling fails to identify the redundancy pair whose two parties are separated by the number of instructions exceeding \texttt{WINDOW\_ENABLE}. Thus, bursty sampling may require some adjustments in some settings, e.g., large arrays.

\subsection{Discussions} 
It is worth noting that there is no one-one relationship between the redundancy fraction and potential performance gains because of pipelining, caching and prefetching in hardware.
%For example, redundant loads where the loads are served from the L1 cache will have the relatively less performance impact compared with the same number of redundant loads where the loads are served from a remote DRAM attached to another socket. 
\loadspy{} does not distinguish actionable vs. non-actionable redundancies, which is a topic of our future work.

\section{\loadspy{} Workflow}
\label{sec:implementation}
 
\loadspy{} consists of three components: a runtime profiler (detailed previously in \S~\ref{sec:methodology}), an analyzer, and a GUI. 
%Figure~\ref{fig:loadspyDesign} shows the workflow of \loadspy{}. 
\loadspy{} accepts fully optimized binary executables and collects runtime profiles via its online profiler. 
%The metrics are maintained in tables with pointers to runtime calling context trees and data objects.
%The data are serialized to an on-disk file at the program termination.
The analyzer and GUI, run in a postmortem fashion, consume the runtime profiles and 
associate them with the application source code.
%\red{The GUI visualizes the data attributed to source code in calling context via different visualization schemes such as top-down, bottom-up, or flat views with inclusive and exclusive metrics.}
The rest of this section discusses the analyzer and GUI.

\begin{comment}
\begin{figure}[t]
\begin{center}
\includegraphics[width=0.45\textwidth]{loadspy_design.pdf}
\end{center}
\vspace{-0.1in}
\caption{Workflow of \loadspy{}.}
\label{fig:loadspyDesign}
\end{figure}
\end{comment}

\subsection{\loadspy{}'s Analyzer}
\label{subsec:analyzer}

\loadspy{}'s analyzer associates the runtime profiles with source code based on the DWARF~\cite{dwarf} information produced by compilers.
As the profiler produces per-thread profiles, the analyzer needs to coalesce the profiles for the whole execution. 
The calling context profiles scale the analysis of program execution to a large number of cores. 
The coalescing procedure follows the rule: two redundancy pairs from different threads are merged $iff$ they have the same redundant loads in the same contexts with the same redundancy scope.
All the metrics are also merged to compute unified ones across threads. 
The scheme is similar for profiles from different processes.

It is worth noting that the profile coalescing overhead grows linearly with the number of threads and processes used by the monitored program. \loadspy{} leverages the reduction tree technique~\cite{Tallent:2010:SIL:1884643.1884683} to parallelize the merging process. 
Typically, \loadspy{} takes less than one minute to produce the aggregate profiles in all of our case studies.

\subsection{\loadspy{}'s GUI}
\label{subsec:gui}

\loadspy{}'s GUI inherits the design of an existing Java-based graphical interface~\cite{adhianto2010hpctoolkit}, which enables navigating the calling contexts and the corresponding source code ordered by the monitored metrics. A top-down view shows a call path $C$ starting from \texttt{main} to a leaf function with the breakdown of metrics at each level. Merely attributing a metric to two independent contexts loses the association between two related contexts during postmortem inspection. To correlate the source with the target, \loadspy{} allows appending a copy of the target calling context to the source calling context. For example, if a load in context \texttt{main->A->B} is redundant with another load in context \texttt{main->C->D}, \loadspy{} constructs a synthetic calling context: \texttt{main->A->B->main->C->D}. The redundancy metrics will be attributed to the leaf of this call chain. These synthetic call chains make it easy to visually navigate profiles and focus on top redundancy pairs. Figure~\ref{fig:avro} in \S~\ref{subsec:avro} shows an example of the GUI, and we postpone the explanation of the GUI details to that section.

\begin{figure*}[t]
\begin{center}
\begin{subfigure}[b]{0.45\textwidth}
\includegraphics[width=.95\textwidth]{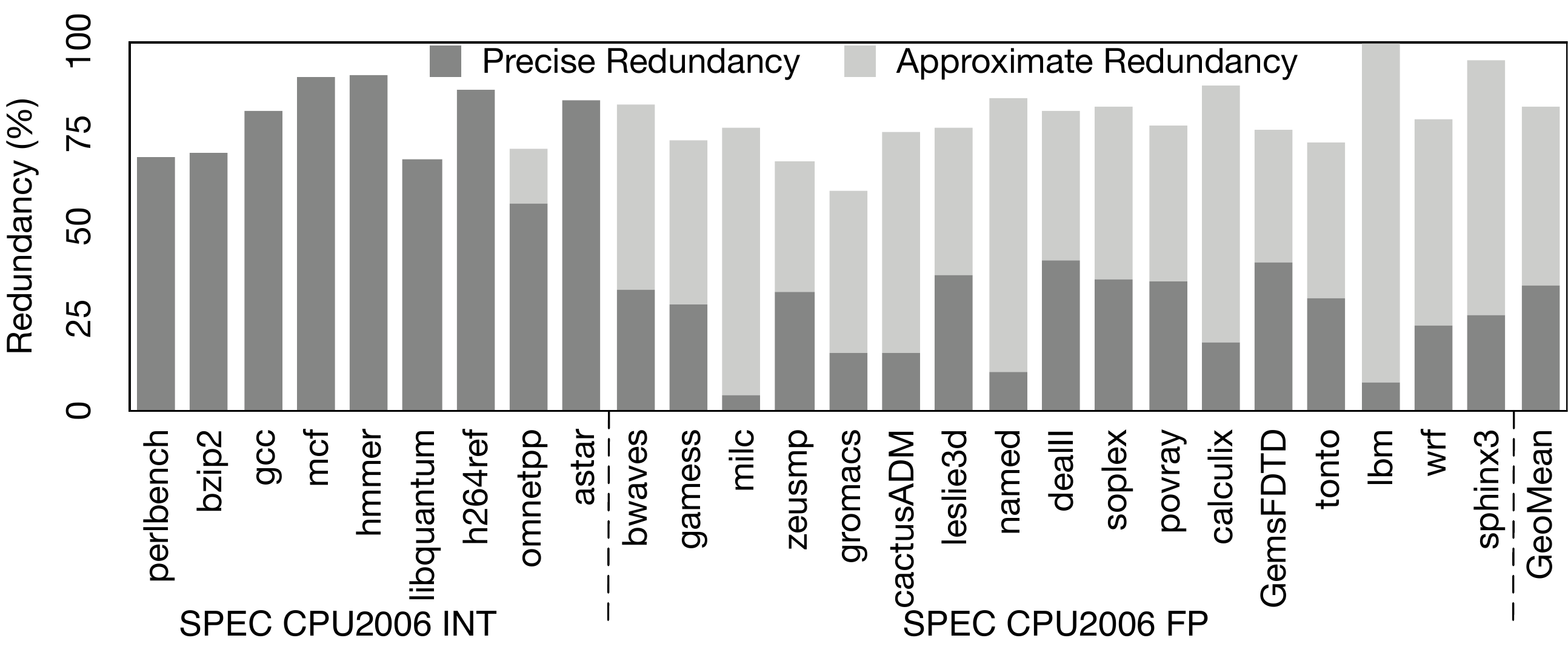}
\caption{Temporal redundancies.}
\label{fig:temporal_breakdown}
\end{subfigure}
~~
\begin{subfigure}[b]{0.45\textwidth}
\includegraphics[width=.95\textwidth]{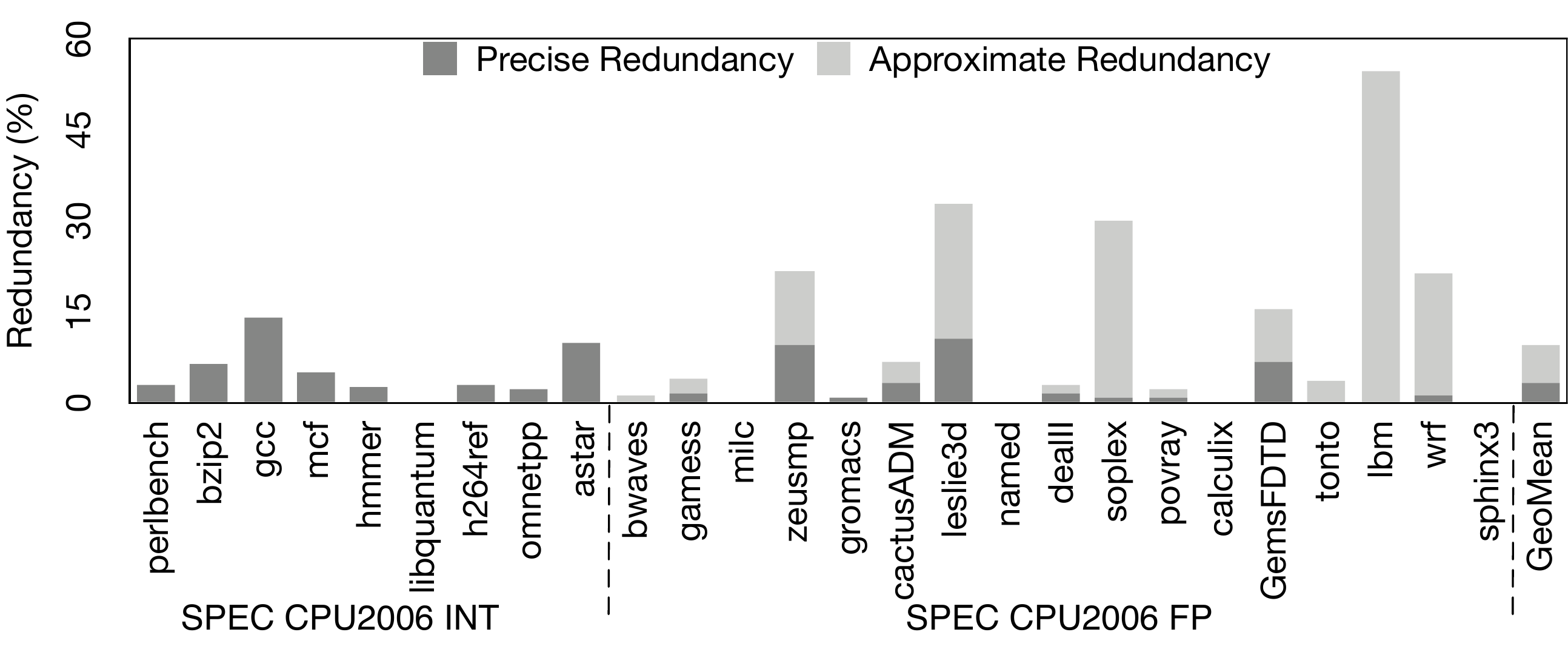}
\caption{Spatial redundancies.}
\label{fig:spatial_breakdown}
\end{subfigure}
\end{center}
\vspace{-0.1in}
\caption{Fraction of temporal and spatial load redundancies on SPEC CPU2006.}
\vspace{-.8em}
\label{fig:breakdown}
\end{figure*}

\section{Evaluation}
\label{sec:experiment}

We evaluate \loadspy{} on a 12-core Intel Xeon E5-2650 v4 CPU (Broadwell) of 2.20GHz frequency running Linux 4.8.0. The machine has 256GB main memory. We evaluate \loadspy{} with well-known benchmarks, such as SPEC CPU2006~\cite{SPEC:CPU2006}, SPEC OMP2012~\cite{SPEC:OMP2012}, SPEC CPU2017~\cite{SPEC:CPU2017}, Parsec-2.1~\cite{parsec}, Rodinia-3.1~\cite{rodinia}, NERSC-8~\cite{TRINITY-WWW}, and Stamp-0.9.10~\cite{4636089}, as well as several real-world applications, such as Apache Avro-1.8.2~\cite{Apache-Avro}, Hoard-3.12~\cite{Berger:2000:HSM:378993.379232}, MASNUM-2.2~\cite{Qiao:2016:HEG:3014904.3014911}, Shogun-6.0~\cite{soeren_sonnenburg_2017_556748}, USQCD Chroma-3.43~\cite{Edwards:2004sx}, Stack RNN~\cite{2015arXiv150301007J}, Binutils-2.27~\cite{binutils}, and Kallisto-0.43~\cite{kallisto-WWW}. 
All the programs are compiled with \texttt{gcc-4.8.5 -O3 PGO} except Hoard-3.12 and MASNUM-2.2. For Hoard-3.12 we use \texttt{clang-5.0.0 -O3  PGO} and for MASNUM-2.2 we use \texttt{icc-17.0.4 -O3 PGO}. We apply the \texttt{ref} inputs for SPEC CPU2006, OMP2012 and CPU2017 benchmarks, the native inputs for Parsec-2.1 benchmarks, and the default inputs released with the remaining benchmarks and applications if not specified. We run all the parallel programs with four threads with simultaneous multi-threading (SMT) disabled. 

In the rest of this section, we first show the fraction of temporal and spatial redundancies obtained from SPEC CPU2006. We then evaluate the accuracy and overhead of \loadspy{} with bursty sampling enabled. We exclude three benchmarks---gobmk, sjeng, and xalancbmk---from monitoring because they have deep call recursion causing \loadspy{} to run out of memory.

\paragraph{\textbf{Load redundancy in macro benchmarks}}
Figure~\ref{fig:breakdown} shows the fraction of temporal and spatial load redundancies on SPEC CPU2006.
We can see (1) load redundancy, especially the temporal one, pervasively exists
and (2) integer benchmarks show a high proportion of precise redundant loads whereas floating-point benchmarks show a high proportion of approximate redundant loads, as expected.

\begin{table}[t]
\centering
\scriptsize
%\footnotesize
\centering % centering table
\begin{adjustbox}{width=0.43\textwidth}
\begin{tabular}  {|c||c|c||c|c|} % creating 10 columns
\hline % inserting double-line
\multirow{2}{*}{Benchmarks} & \multicolumn{2}{c||}{Detecting Temporal Redundancy} & \multicolumn{2}{c|}{Detecting Spatial Redundancy}\\ \cline{2-5}
& Runtime Slowdown & Memory Bloat  & Runtime Slowdown & Memory Bloat \\
\hline\hline % inserts single-line
perlbench & 38$\times$ & 11$\times$ & 51$\times$ & 7$\times$ \\
bzip2 & 13$\times$ & 2$\times$ & 13$\times$ & 1.09$\times$ \\
gcc & 19$\times$ & 26$\times$ & 19$\times$ & 25$\times$ \\
mcf & 6$\times$ & 14$\times$ & 6$\times$ & 1.04$\times$  \\
hmmer & 12$\times$ & 35$\times$ & 11$\times$ & 20$\times$ \\
libquantum & 12$\times$ & 18$\times$ & 13$\times$ & 2$\times$ \\
h264ref & 21$\times$ & 20$\times$ & 21$\times$ & 2$\times$ \\
omnetpp & 10$\times$ &16$\times$ & 14$\times$ & 25$\times$ \\
astar & 11$\times$ & 13$\times$ & 11$\times$ & 18$\times$ \\
bwaves & 17$\times$ & 14$\times$ & 15$\times$ & 1.16$\times$ \\
gamess & 24$\times$ & 25$\times$ & 24$\times$ & 24$\times$ \\
milc & 4$\times$ & 10$\times$ & 4$\times$ & 1.18$\times$ \\
zeusmp & 8$\times$ & 14$\times$ & 7$\times$ & 1.42$\times$ \\
gromacs & 10$\times$ & 23$\times$ & 9$\times$ & 15$\times$ \\
cactusADM & 7$\times$ & 10$\times$ & 7$\times$ & 1.36$\times$ \\
leslie3d & 9$\times$ & 10$\times$ & 8$\times$ & 2$\times$ \\
named & 10$\times$ & 11$\times$ & 10$\times$ & 9$\times$ \\
dealII & 21$\times$ & 30$\times$ & 22$\times$ & 19$\times$ \\
soplex & 13$\times$ & 13$\times$ & 13$\times$ & 2$\times$ \\
povray & 29$\times$  & 216$\times$ & 28$\times$ & 70$\times$ \\
calculix & 21$\times$ & 18$\times$ & 20$\times$ & 19$\times$ \\
GemsFDTD & 8$\times$ & 14$\times$ & 8$\times$ & 1.42$\times$ \\
tonto & 22$\times$ & 49$\times$ & 24$\times$ & 30$\times$ \\
lbm & 4$\times$ & 14$\times$ & 3$\times$ & 1.15$\times$ \\
wrf & 15$\times$ & 10$\times$ & 16$\times$ & 3$\times$ \\
sphinx3 & 13$\times$ & 16$\times$ & 13$\times$ & 7$\times$ \\  \hline
\textbf{Median}  & \textbf{12.5$\times$} & \textbf{14$\times$} & \textbf{13$\times$} & \textbf{5$\times$} \\ 
\textbf{GeoMean}  & \textbf{13$\times$} & \textbf{17$\times$} & \textbf{13$\times$} & \textbf{5$\times$} \\ \hline
\end{tabular}
\end{adjustbox}
\caption{\loadspy{}'s runtime slowdown and memory bloat over native execution on SPEC CPU2006.}
\vspace{-1.5em}
\label{tab:overhead}
\end{table}

\begin{figure*}[t]
\begin{center}
\begin{subfigure}[b]{0.49\textwidth}
\includegraphics[width=\textwidth]{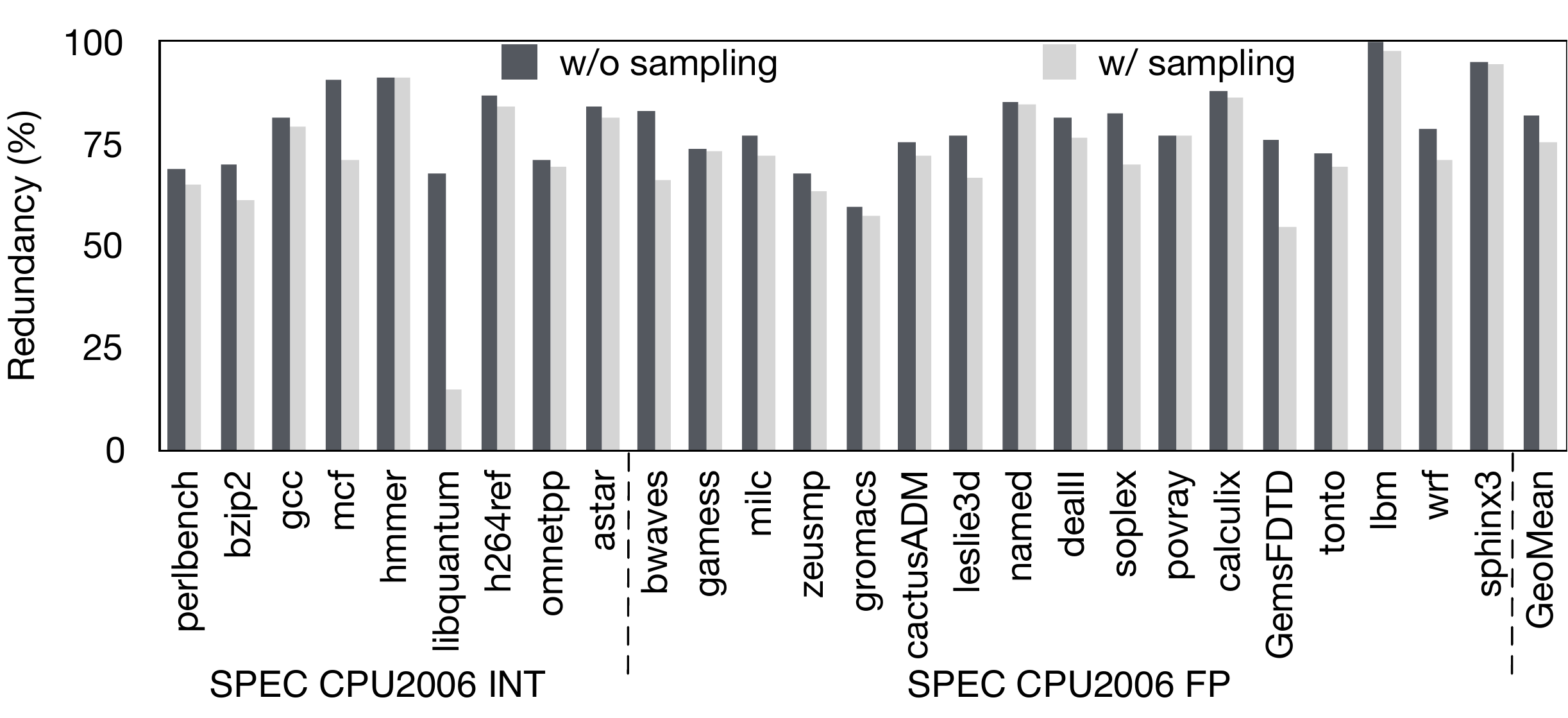}
\caption{Temporal redundancies.}
\end{subfigure}
~
\begin{subfigure}[b]{0.49\textwidth}
\includegraphics[width=\textwidth]{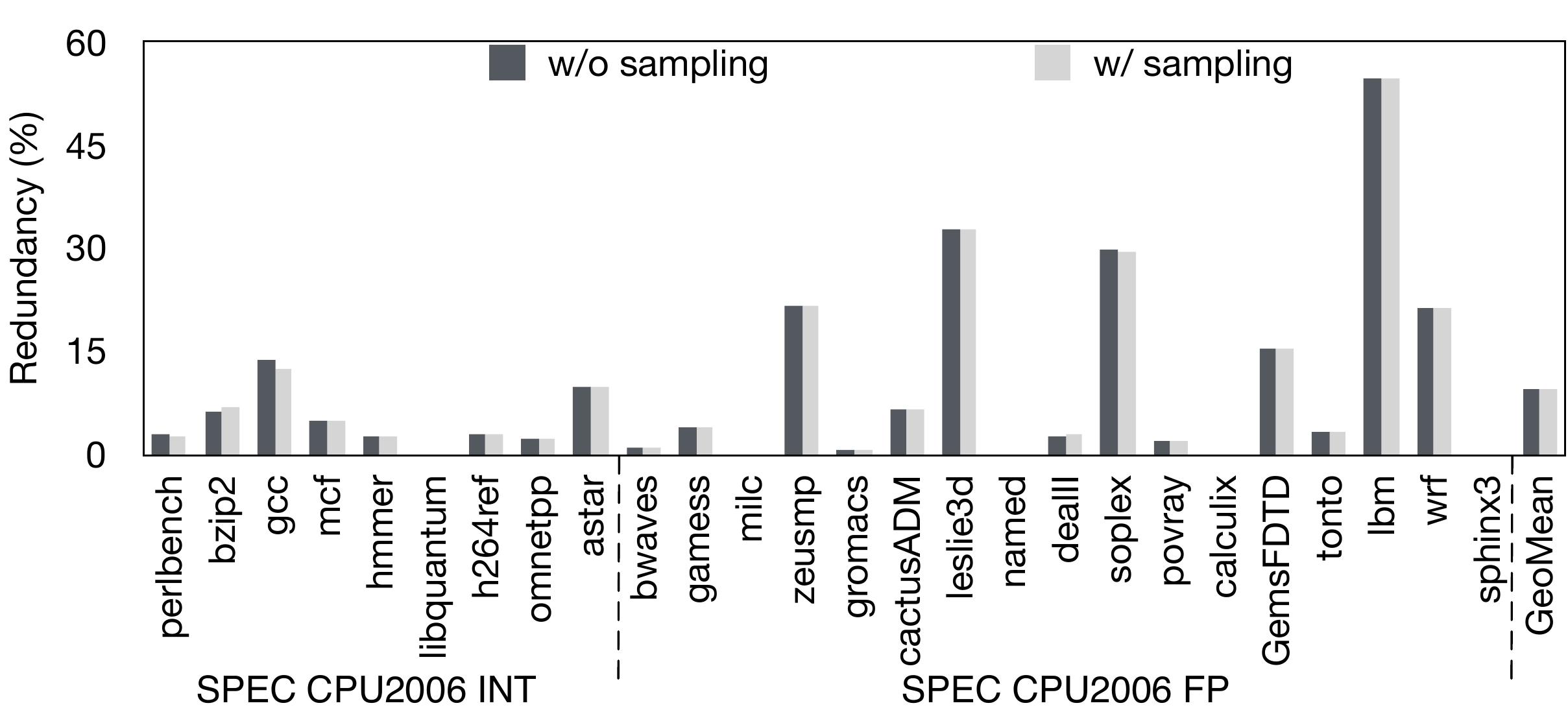}
\caption{Spatial redundancies.}
\end{subfigure}
\end{center}
\vspace{-0.15in}
\caption{Comparing temporal and spatial load redundancies with bursty sampling disabled and enabled. The sampling rate is 1\%.}
\label{fig:accuracy}
\end{figure*}

\paragraph{\textbf{Accuracy}} 
\loadspy{} offers bursty sampling as an optional feature for users willing to tradeoff measurement accuracy with performance.
Figure~\ref{fig:accuracy} evaluates the accuracy of \loadspy{} with bursty sampling enabled.
The geo-means of spatial load redundancy fractions \loadspy{} measures with sampling enabled and disabled are nearly the same---10\%.
The geo-means of temporal load redundancy fractions \loadspy{} measures with sampling enabled and disabled are similar---76\% and 82\%.
However, \texttt{libquantum} is an outlier, whose temporal redundancy fractions are 15\% and 68\% with sampling enabled and disabled. 
With further investigation, we find that the average number of instructions executed between the source and target load operations of most redundancy pairs is more than 10 million, which is greater than the default \texttt{WINDOW\_ENABLE} (= 1 million). 
In such a case, one can enlarge \texttt{WINDOW\_ENABLE} to improve the accuracy. 
For instance, when we set \texttt{WINDOW\_ENABLE} = 10 million and 50 million (\texttt{WINDOW\_DISABLE} remains unchanged), the temporal load redundancy fraction of \texttt{libquantum} increases to 30\% and 60\%, respectively. 

\paragraph{\textbf{Overhead}}
Table~\ref{tab:overhead} shows the runtime slowdown and memory bloat of \loadspy{} on SPEC CPU2006. The runtime slowdown (memory bloat) is measured as the ratio of the runtime (peak memory usage) of a benchmark with \loadspy{} enabled to the runtime (peak memory usage) of its native execution.
The geo-means of runtime slowdown for detecting temporal and spatial redundancies are both 13$\times$, and
the geo-means of memory bloat for detecting temporal and spatial redundancies are 17$\times$ and 5$\times$, respectively. 
A few benchmarks such as \texttt{tonto} and \texttt{povray} show excessive memory bloat due to the following reasons:
(1) \texttt{tonto} has a deep call stack, which demands excessive space to maintain its calling context tree and
(2) \texttt{povray} has a small ($\sim$6MB) memory footprint, whereas some preallocated data structures in \loadspy{} overshadow this baseline memory footprint.

\begin{table*}
\centering % centering table 
\begin{adjustbox}{width=0.88\textwidth}
\scriptsize
\begin{tabular}{|c|c|c||c|c||c|c|} % creating columns
\hline
\multicolumn{3}{|c||}{Program Information} & \multicolumn{2}{c||} {\loadspy{}} & \multicolumn{2}{c|} {Optimization} \\ \cline{1-7}
\multicolumn{2}{|c|}{Programs} & Problematic Code & Redundancy Types & Inefficiencies & Approaches & WS$^*$ \\  \hline
\hline % inserts single-line
\multirow{15}{*}{\rot{Macro Benchmarks}} & 359.botsspar & sparselu.c:loop(191) & Temporal & Inefficient register usage & Scalar replacement & 1.77$\times$\\ \cline{2-7}
& 453.povray & csg.cpp(250) & Temporal & Missing inline substitution & Function inlining & 1.05$\times$\\
%&\cmark 456.hmmer & fast\_algorithms.c:loop(133) & Spatial & Input-sensitive redundancy & Approximate computing & 1.15$\times$\\
& 464.h264ref & mv-search.c:loop(394) & Temporal & Missing inline substitution & Function inlining & 1.28$\times$\\
& \cmark 470.lbm & lbm.c:LBM\_performStreamCollide & Spatial & Redundant computation & Approximate computing & 1.25$\times$\\ \cline{2-7}
& \cmark 538.imagick\_r & morphology.c:loop(2982) & Spatial & Redundant computation & Conditional computation & 1.25$\times$\\ \cline{2-7}
& \cmark backprop & backprop.c:loop(322) & Spatial& Input-sensitive redundancy & Conditional computation & 1.13$\times$\\
& \cmark hotspot3D & 3D.c:loop(98, 166) & Temporal&  Inefficient register usage & Scalar replacement  & 1.13$\times$\\
& \cmark lavaMD & kernel\_cpu.c(175) & Temporal & Redundant function calls & Reusing the previous result & 1.39$\times$\\
& \cmark srad\_v1 & main.c:loop(256) & Temporal& Inefficient register usage  & Scalar replacement  & 1.11$\times$\\
& \cmark srad\_v2 & srad.cpp:loop(131) & Temporal& Inefficient register usage  & Scalar replacement  & 1.12$\times$\\ 
& \cmark particlefilter & ex\_particle\_OPENMP\_seq.c:findIndex & Temporal & Linear search & Binary search & 9.8$\times$\\ \cline{2-7}
& vacation & client.c:loop(198) & Temporal & Redundant function calls & Reusing the previous result & 1.23$\times$\\ \cline{2-7}
& dedup & hashtable.c:hashtable\_search & Temporal & Poor hashing &  Reducing hash collisions & 1.11$\times$\\ \cline{2-7}
& msgrate & msgrate.c:cache\_invalidate & Temporal & Missing constant propagation & Copy propagation & 3.03$\times$\\  \hline
\multirow{10}{*}{\rot{Real Applications}} & \cmark Apache Avro-1.8.2 & Specific.hh(110, 117) & Temporal & Missing inline substitution & Function inlining & 1.19$\times$\\ \cline{2-7}
& \cmark Hoard-3.12 & libhoard.cpp:xxmalloc & Temporal & Redundant computation & Reusing the previous result & 1.14$\times$\\ \cline{2-7}
& \cmark MASNUM-2.2 & propagat.inc:loop(130, 140) & Temporal& Linear search & Locality-friendly search & 1.79$\times$\\ \cline{2-7}
& \cmark USQCD Chroma-3.43 & qdp\_random.h(56) & Temporal & Missing inline substitution & Function inlining & 1.06$\times$\\ \cline{2-7}
& \cmark Shogun-6.0 & \makecell{DenseFeatures.cpp(505) \\Distance.cpp(185)} & Temporal& Missing inline substitution & Function inlining & 1.06$\times$\\ \cline{2-7}
& \cmark Stack RNN & StackRNN.h:loop(350, 355, 363, 367) & \makecell{Temporal \\Spatial} & \makecell{Poor choice of algorithm \\Redundant computation} & \makecell{Loop fusion \\Conditional computation} & 1.09$\times$\\ \cline{2-7}
& Kallisto-0.43 & KmerHashTable.h(131) & Temporal & Poor hashing & Reducing hash collisions & 4.1$\times$\\ \cline{2-7}
& Binutils-2.27 & dwarf2.c:loop(2166) & Temporal & Linear search & Binary search & 3.29$\times$\\ \hline

\multicolumn{4}{l}{{\vbox to 2ex{\vfil}}\scriptsize \cmark: newfound performance bugs via \loadspy{}. } \\
\multicolumn{4}{l}{{\vbox to 2ex{\vfil}}\scriptsize WS$^*$: whole-program speedup after problem elimination.} \\
\end{tabular}
\end{adjustbox}
\caption{Overview of performance improvement guided by \loadspy{}.}
%\vspace{-1.5em} 	
\label{tab:perf}
\end{table*}

\section{Case Studies}
\label{sec:use}

We evaluate the load redundancies found in some benchmarks and real-world applications. 
Table~\ref{tab:perf} summarizes the inefficiencies found and the speedups obtained by eliminating them. 
We quantify the performance of all programs in execution time except Hoard in throughput. 
In the rest of this section, we exhaustively analyze all the newfound performance bugs. 

\subsection{Apache Avro-1.8.2}
\label{subsec:avro}
 
Avro~\cite{Apache-Avro} is a remote procedure call (RPC) and data serialization processing system. We apply \loadspy{} to evaluate the C++ version of Avro with benchmarks developed by Sorokin~\cite{avro-benchmark-WWW}. \loadspy{} reports a temporal redundancy fraction $\mathcal{R}_{prog}^{precise}$ of 79\% for the entire program.
Figure~\ref{fig:avro} shows the full calling contexts of the top redundancy pair visualized through \loadspy{}'s GUI. \loadspy{}'s GUI consists of three panes: the top pane shows the program source code, the bottom left pane shows the full calling contexts of each redundancy pair, and the bottom right pane shows the metrics associated with each redundancy pair. In this figure, the GUI shows two metrics:  the number of redundant loads for a given redundancy pair and percentage of redundant instances for a given pair, which if 100\%, means every instance of this pair is redundant.

From the figure, we can see that the redundant loads in function \texttt{doEncodeLong} account for 25\% of the total redundant loads in the program. 
Moreover, all instances of this pair are redundant. The redundancy scope of this pair is the loop at lines 229-233 in the file \texttt{Specific.hh} enclosing the call site of function \texttt{encode}. Function \texttt{encode} is the caller of function \texttt{doEncodeLong}.
With further analysis, we find that the epilog of function \texttt{doEncodeLong} consistently pops the same values from the same stack location to restore the register values. 
To eliminate redundant loads in the function epilog, we inline \texttt{doEncodeLong} into its caller. 
\loadspy{} further identifies another problematic function (not shown) and guides the same inlining optimization. 
Together, these optimizations eliminate 31\% of the memory loads and 37\% of the redundant memory loads, yielding a 1.19$\times$ speedup for the whole program.
 
\begin{figure}
\begin{center}
\includegraphics[width=0.49\textwidth]{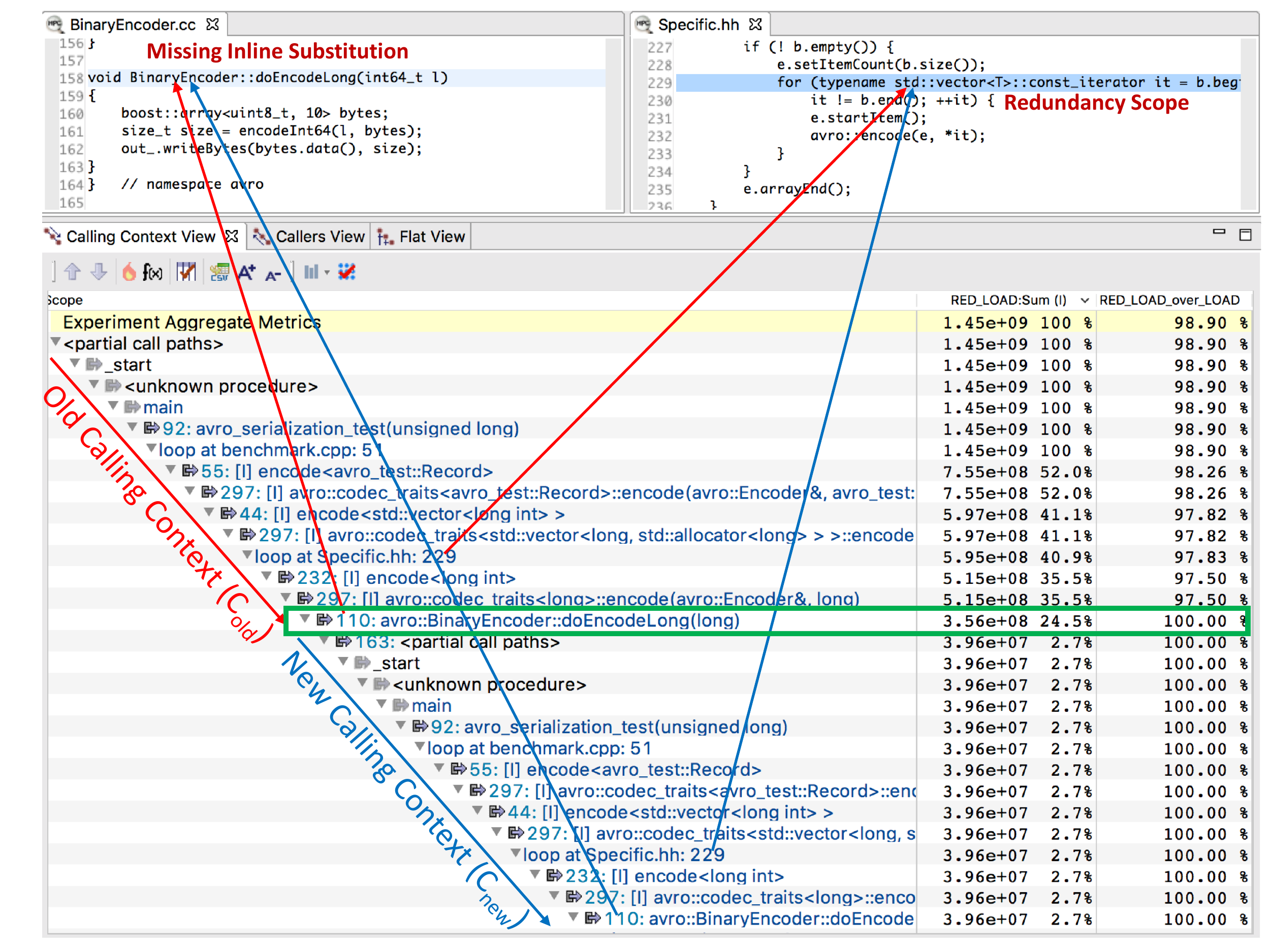}
\end{center}
\vspace{-0.15in}
\caption{The top redundancy pair in \texttt{Avro} with full calling contexts reported by \loadspy{}. Along the calling contexts shown in the bottom left pane, a procedure name following a symbol \texttt{[I]} means it is inlined. We can see that most procedures on the path are inlined, except \texttt{doEncodeLong}. 
Many redundant loads are from calling \texttt{doEncodeLong}, which can be removed by function inlining.}
\vspace{-1.2em}
\label{fig:avro}
\end{figure}

\subsection{MASNUM-2.2}
\label{subsec:masnum}
  
MASNUM~\cite{Qiao:2016:HEG:3014904.3014911}, one of the 2016 ACM Gordon Bell Prize finalists, forecasts ocean surface waves and climate change. It is written in Fortran and parallelized with MPI. \loadspy{} identifies 91\% of memory loads are redundant, of which 15\% are attributed to the array \texttt{x} at line 6 on the left of Listing~\ref{lst:motivationExample}. \loadspy{} also pinpoints the redundancy scope as the outermost loop at line 1. We find that the innermost loop (line 5) performs a linear search over the non-decreasing array \texttt{x} for a given input \texttt{xx}. With multiple iterations, elements of array \texttt{x} are frequently loaded from memory for comparison, leading to the redundancy. 
Changing the linear search to a binary search reduces redundant loads and yields a 1.32$\times$ speedup for the entire program. 
It is worth noting that the binary search still incurs high load redundancy fraction because of the intensive search requests in the program.
To further improve the search algorithm, we analyze the values of \texttt{xx} across iterations. 
We find that \texttt{xx} has good value locality, that is, the values are similar in adjacent iterations of the outermost loop.
Thus, we replace the binary search with a locality-friendly search. We memoize the location index \texttt{iii} when the current search finishes; in the next search, we begin at the recorded \texttt{iii} and alternate the linear search in both directions to the array start and end.
This optimization eliminates 33\% of the memory loads and 36\% of the redundant memory loads, yielding a 1.79$\times$ speedup for the entire program.
 
\subsection{Hoard-3.12}
\label{subsec:hoard}
   
Hoard~\cite{Berger:2000:HSM:378993.379232}, a high-performance cross-platform C++ based memory allocator, has been integrated into an array of applications and programming languages such as GNU Bayonne and Cilk programming language. It has 20K lines of code and is parallelized with the \texttt{PThreads} library. 
\loadspy{} identifies that 58\% of memory loads are redundant on profiling Hoard's built-in benchmark \texttt{larson}. The top redundancy pair is associated with lines 4 and 7 shown in Listing~\ref{lst:hoard}, which accounts for 11\% of the total redundant loads. %The \redScope{} is the least common ancestor function \texttt{xxmalloc}.
The cause of such redundancy is that the program repeatedly checks whether \texttt{theTLAB} is a null pointer. More specifically, function \texttt{isCustomHeapInitialized} at line 15 and function \texttt{getCustomHeap} at line 16 both include code to check whether \texttt{theTLAB} is equal to \texttt{nullptr}. Hence, the second check at lines 8-11 in \texttt{getCustomHeap} is redundant. 
 
To eliminate such redundant loads, we inline these two functions into their caller \texttt{xxmalloc} and remove the redundant check.
This optimization eliminates 3\% of the memory loads and 2\% of the redundant memory loads, which improves the throughput (i.e., the number of memory operations per second) of Hoard by 1.14$\times$.

\begin{figure}[t] 
\begin{lstlisting}[firstnumber=1,language=c, caption=Temporal load redundancy in Hoard-3.12. The program repeatedly checks whether the pointer variable \texttt{theTLAB} is null., label=lst:hoard]
static __thread TheCustomHeapType * theTLAB INITIAL_EXEC_ATTR = nullptr;
...
bool isCustomHeapInitialized() {
@$\blacktriangleright$@ return (theTLAB != nullptr);
}
TheCustomHeapType * getCustomHeap() {
@$\blacktriangleright$@ auto tlab = theTLAB;
  if (tlab == nullptr) {
    tlab = initializeCustomHeap();
    theTLAB = tlab;
  }
  return tlab;
}
void * xxmalloc (size_t sz) {
  if (isCustomHeapInitialized()) {
   void * ptr = getCustomHeap()->malloc(sz);
   ... 
  }
}
\end{lstlisting}
\end{figure}

\subsection{USQCD Chroma-3.43}
\label{subsec:chroma}

Chroma~\cite{Edwards:2004sx} is a complex toolbox for performing quantum chromodynamics lattice computations, 
%from Thomas Jefferson National Accelerator Facility
which has more than 200K lines of code. We evaluate it using the built-in benchmark \texttt{t\_mesplq}. 
\loadspy{} reports a temporal redundancy fraction of 61\%. The top redundancy pair is attributed to the function \texttt{sranf} at line 3 shown in Listing~\ref{lst:chroma}. With further investigation, we notice that Chroma has a similar performance bug to the one in \texttt{Apache Avro}: the epilog of function \texttt{sranf} repeatedly pops the same values from the same stack location to restore the register values. 

To eliminate such redundant loads, we manually inline the callee into its caller. 
This optimization eliminates 6\% of the memory loads and 7\% of the redundant memory loads, yielding a 1.06$\times$ speedup for the whole program.

\begin{figure}[t] 
\begin{lstlisting}[firstnumber=1,language=fortran, caption=Temporal load redundancy in USQCD Chroma-3.43. The epilog of function \texttt{sranf} often pops the same values from the same stack location to restore the register values., label=lst:chroma]
template<class T1, class T2>
inline void fill_random(float& d, T1& seed, T2& skewed_seed, const T1& seed_mult) {
@$\blacktriangleright$@ d = float(RNG::sranf(seed, skewed_seed, seed_mult));
}
\end{lstlisting}
\end{figure}

\subsection{Shogun-6.0}
\label{subsec:shogun}

Shogun~\cite{soeren_sonnenburg_2017_556748} is an efficient machine learning toolbox. \loadspy{} reports a temporal redundancy fraction of 71\% on profiling its built-in benchmark \texttt{kernel\_matrix\_sum\_benchmark}. Listing~\ref{lst:shogun} shows one of the top redundancy pairs at line 6. The cause of such redundancy is similar to \texttt{Apache Avro}: the epilog of function \texttt{get\_feature\_vector} repeatedly pops the same values from the same stack location to restore the register values. We manually inline the callee into its caller to eliminate these redundant loads. Additionally, We perform the same optimization for other function invocations that have the same performance issue. These optimizations eliminate 7\% of the memory loads and 2\% of the redundant memory loads, yielding a $1.06\times$ speedup for the whole program.

\begin{figure}[t]
\begin{lstlisting}[firstnumber=1,language=c, caption= {Temporal load redundancy in Shogun-6.0. The epilog of function \texttt{get\_feature\_vector} often pops the same values from the same stack location to restore the register values.}, label=lst:shogun]
template<class ST> float64_t CDenseFeatures<ST>::dot(int32_t vec_idx1, CDotFeatures* df, int32_t vec_idx2) {
  ...
  CDenseFeatures<ST>* sf = (CDenseFeatures<ST>*) df;
  int32_t len1, len2;
  bool free1, free2;
@$\blacktriangleright$@ ST* vec1 = get_feature_vector(vec_idx1, len1, free1);
  ...
}
\end{lstlisting}
\end{figure}

\subsection{Stack RNN}
\label{subsec:stackRnn}

Stack RNN~\cite{2015arXiv150301007J} is a C++ based project originating from Facebook AI research, which applies memory stack to optimize and extend a recurrent neural network. 
We evaluate Stack RNN by profiling its built-in application \texttt{train\_add} with \loadspy{}. \loadspy{} quantifies a redundancy fraction of 81\%, and pinpoints that the top temporal and spatial load redundancy pairs are associated with four loops shown in Listing~\ref{lst:stackRnn}. 

The cause of the temporal load redundancy is that each of the four loops accesses array \texttt{\_err\_stack}. However, the compiler cannot keep all elements of array \texttt{\_err\_stack} in CPU registers across these loops. 
Thus, the elements of array \texttt{\_err\_stack} are repeatedly loaded from memory into registers. 
We eliminate the temporal redundant loads by loop fusion, which fuses the four loops into one so that array \texttt{\_err\_stack} is only loaded once. 
	
The cause of the spatial load redundancy is that most elements of array \texttt{\_err\_stack} are zeros, resulting in identity computation at lines 2, 5, 9 and 12 shown in Listing~\ref{lst:stackRnn}. 
We employ a conditional check to avoid the computation on identities. 
These two optimizations together eliminate 10\% of the memory loads and 15\% of the redundant memory loads, yielding a 1.09$\times$ speedup for the whole program.
	
\begin{figure}[t]
\begin{lstlisting}[firstnumber=1,language=c,  caption={Temporal and spatial load redundancies in Stack RNN. Array \texttt{\_err\_stack} is loaded from memory by each of the four loops, resulting in temporal load redundancy. Besides, most elements of array \texttt{\_err\_stack} equal zero, resulting in spatial load redundancy.}, label=lst:stackRnn]
for (my_int i = _TOP_OF_STACK; i < _TOP_OF_STACK + _STACK_SIZE - 1; i++) {
@$\blacktriangleright$@ _pred_err_stack[s][i+1] += _err_stack[s][i] * _act[s][itm][pop];
 }
for (my_int i = _TOP_OF_STACK; i < _TOP_OF_STACK + _STACK_SIZE - 1; i++) {
@$\blacktriangleright$@ _err_act[s][pop] += _err_stack[s][i] * _stack[s][old_it][i+1];
 }
_err_act[s][pop] += _err_stack[s][_TOP_OF_STACK + _STACK_SIZE - 1] * EMPTY_STACK_VALUE;
for (my_int i = _TOP_OF_STACK + 1; i < _TOP_OF_STACK + _STACK_SIZE; i++) {
@$\blacktriangleright$@ _pred_err_stack[s][i-1] += _err_stack[s][i] * _act[s][itm][push];
 }
 for (my_int i = _TOP_OF_STACK + 1; i < _TOP_OF_STACK + _STACK_SIZE; i++) {
@$\blacktriangleright$@ _err_act[s][push] += _err_stack[s][i] * _stack[s][old_it][i-1];
 }
\end{lstlisting} 
\end{figure}

\subsection{SPEC CPU2006 470.lbm}
\label{subsec:lbm}

470.lbm~\cite{SPEC:CPU2006} employs the lattice boltzmann method to simulate incompressible fluids in three-dimensional space. \loadspy{} reports that spatial redundant loads account for 55\% of the total memory loads, of which more than 30\% are attributed to the array \texttt{srcGrid} at lines 11-54 shown in Listing~\ref{lst:lbm}. With further investigation, we find that array \texttt{srcGrid} is traversed across loop iterations and most of its elements are identical, resulting in many redundant loads.

To optimize this inefficiency, we apply loop perforation~\cite{Sidiroglou-Douskos:2011:MPV:2025113.2025133} to reduce the number of iterations at the cost of accuracy. With this optimization, the memory loads and redundant memory loads are reduced by 26\% and 60\%, and the whole program gains a 1.25$\times$ with trivial accuracy loss (7.7e-5\%).

\begin{figure}[t]
\begin{lstlisting}[firstnumber=1,language=c, caption= Spatial load redundancy in  SPEC CPU2006 470.lbm. Array \texttt{srcGrid} is frequently loaded from memory while most array elements have the same values., label=lst:lbm]
#define SWEEP_START(x1,y1,z1,x2,y2,z2) \
  for( i = CALC_INDEX(x1, y1, z1, 0); \
  i < CALC_INDEX(x2, y2, z2, 0); \
  i += N_CELL_ENTRIES ) {
#define SWEEP_END }
...
static double srcGrid[SIZE_Z*SIZE_Y*SIZE_X*N_CELL_ENTRIES];
...
SWEEP_START( 0, 0, 0, 0, 0, SIZE_Z ) // loop entry
...
@$\blacktriangleright$@ rho = + SRC_C ( srcGrid ) + SRC_N ( srcGrid )
@$\blacktriangleright$@ + SRC_S ( srcGrid ) + SRC_E ( srcGrid )
@$\blacktriangleright$@ + SRC_W ( srcGrid ) + SRC_T ( srcGrid )
@$\blacktriangleright$@ + SRC_B ( srcGrid ) + SRC_NE( srcGrid )
@$\blacktriangleright$@ + SRC_NW( srcGrid ) + SRC_SE( srcGrid )
@$\blacktriangleright$@ + SRC_SW( srcGrid ) + SRC_NT( srcGrid )
@$\blacktriangleright$@ + SRC_NB( srcGrid ) + SRC_ST( srcGrid )
@$\blacktriangleright$@ + SRC_SB( srcGrid ) + SRC_ET( srcGrid )
@$\blacktriangleright$@ + SRC_EB( srcGrid ) + SRC_WT( srcGrid )
@$\blacktriangleright$@ + SRC_WB( srcGrid );
@$\blacktriangleright$@ ux = + SRC_E ( srcGrid ) - SRC_W ( srcGrid )
@$\blacktriangleright$@ + SRC_NE( srcGrid ) - SRC_NW( srcGrid )
@$\blacktriangleright$@ + SRC_SE( srcGrid ) - SRC_SW( srcGrid )
@$\blacktriangleright$@ + SRC_ET( srcGrid ) + SRC_EB( srcGrid )
@$\blacktriangleright$@ -SRC_WT( srcGrid ) - SRC_WB( srcGrid );
@$\blacktriangleright$@ uy = + SRC_N ( srcGrid ) - SRC_S ( srcGrid )
@$\blacktriangleright$@ + SRC_NE( srcGrid ) + SRC_NW( srcGrid )
@$\blacktriangleright$@ - SRC_SE( srcGrid ) - SRC_SW( srcGrid )
@$\blacktriangleright$@ + SRC_NT( srcGrid ) + SRC_NB( srcGrid )
@$\blacktriangleright$@ - SRC_ST( srcGrid ) - SRC_SB( srcGrid );
@$\blacktriangleright$@ uz = + SRC_T ( srcGrid ) - SRC_B ( srcGrid )
@$\blacktriangleright$@ + SRC_NT( srcGrid ) - SRC_NB( srcGrid )
@$\blacktriangleright$@ + SRC_ST( srcGrid ) - SRC_SB( srcGrid )
@$\blacktriangleright$@ + SRC_ET( srcGrid ) - SRC_EB( srcGrid )
@$\blacktriangleright$@ + SRC_WT( srcGrid ) - SRC_WB( srcGrid );
...
@$\blacktriangleright$@ DST_C ( dstGrid ) = (1.0-OMEGA)*SRC_C ( srcGrid ) + ...
@$\blacktriangleright$@ DST_N ( dstGrid ) = (1.0-OMEGA)*SRC_N ( srcGrid ) + ...
@$\blacktriangleright$@ DST_E ( dstGrid ) = (1.0-OMEGA)*SRC_E ( srcGrid ) + ...
@$\blacktriangleright$@ DST_W ( dstGrid ) = (1.0-OMEGA)*SRC_W ( srcGrid ) + ...
@$\blacktriangleright$@ DST_T ( dstGrid ) = (1.0-OMEGA)*SRC_T ( srcGrid ) + ...
@$\blacktriangleright$@ DST_B ( dstGrid ) = (1.0-OMEGA)*SRC_B ( srcGrid ) + ...
@$\blacktriangleright$@ DST_NE( dstGrid ) = (1.0-OMEGA)*SRC_NE( srcGrid ) + ...
@$\blacktriangleright$@ DST_NW( dstGrid ) = (1.0-OMEGA)*SRC_NW( srcGrid ) + ...
@$\blacktriangleright$@ DST_SE( dstGrid ) = (1.0-OMEGA)*SRC_SE( srcGrid ) + ...
@$\blacktriangleright$@ DST_SW( dstGrid ) = (1.0-OMEGA)*SRC_SW( srcGrid ) + ...
@$\blacktriangleright$@ DST_NT( dstGrid ) = (1.0-OMEGA)*SRC_NT( srcGrid ) + ...
@$\blacktriangleright$@ DST_NB( dstGrid ) = (1.0-OMEGA)*SRC_NB( srcGrid ) + ...
@$\blacktriangleright$@ DST_ST( dstGrid ) = (1.0-OMEGA)*SRC_ST( srcGrid ) + ...
@$\blacktriangleright$@ DST_SB( dstGrid ) = (1.0-OMEGA)*SRC_SB( srcGrid ) + ...
@$\blacktriangleright$@ DST_ET( dstGrid ) = (1.0-OMEGA)*SRC_ET( srcGrid ) + ...
@$\blacktriangleright$@ DST_EB( dstGrid ) = (1.0-OMEGA)*SRC_EB( srcGrid ) + ...
@$\blacktriangleright$@ DST_WT( dstGrid ) = (1.0-OMEGA)*SRC_WT( srcGrid ) + ...
@$\blacktriangleright$@ DST_WB( dstGrid ) = (1.0-OMEGA)*SRC_WB( srcGrid ) + ...
...
SWEEP_END // loop exit
\end{lstlisting}
\end{figure}

\subsection{SPEC CPU2017 538.imagick\_r}
\label{subsec:imagick_r}

538.imagick\_r~\cite{SPEC:CPU2017} is applied to create, edit, compose or convert bitmap images. \loadspy{} reports that spatial redundant loads account for 13\% of the total memory loads, of which 24\% are attributed to the variable \texttt{k} at lines 6-9 shown in Listing \ref{lst:imagick_r}.
\texttt{k} is a pointer to the floating-point array {\texttt{values} at line 2 and decrements by one in each iteration. We find that most elements of this array equal zero, causing \texttt{*k} to equal zero in most of loop iterations. 
	
To remove the identity computation on \texttt{*k}, we introduce a conditional check to filter out all zero values. With this optimization, the memory loads and redundant memory loads are reduced by 19\% and 51\%, and the whole program achieves a 1.25$\times$ speedup.

\begin{figure}[t]
\begin{lstlisting}[firstnumber=1,language=c, caption= {Spatial load redundancy in SPEC CPU2017 538.imagick\_r. Array \texttt{values} is frequently loaded from memory. However, most array elements equal zero.}, label=lst:imagick_r]
register const double *restrict k;
k = &kernel->values[kernel->width*kernel->height-1]
...
for (u=0; u < (ssize_t) kernel->width; u++, k--) {
  if (IsNaN(*k)) continue;
@$\blacktriangleright$@  result.red += (*k)*k_pixels[u].red;
@$\blacktriangleright$@  result.green += (*k)*k_pixels[u].green;
@$\blacktriangleright$@  result.blue += (*k)*k_pixels[u].blue;
@$\blacktriangleright$@  result.opacity += (*k)*k_pixels[u].opacity;
   ... 
}
\end{lstlisting}
\end{figure}

\subsection{Rodinia-3.1 Srad}
\label{subsec:srad}
 
Srad~\cite{rodinia} applies partial differential equations to filter noise in images, which is widely used in ultrasonic and radar imaging applications. We profile the OpenMP version of srad\_v1. \loadspy{} reports a temporal redundancy fraction of 99\%. 8\% of the redundancy is attributed to the array \texttt{image} at lines 12-14 shown in Listing~\ref{lst:srad}. We notice that when 0 $<$ \texttt{i} $<$ \texttt{Nr} - 1,  the value of \texttt{image[iS[i] + Nr*j]} in one iteration equals the value of \texttt{image[k]} in the next iteration and further equals the value of \texttt{image[iN[i] + Nr*j]} in the iteration after next. 

To fix this problem, we adopt scalar replacement to avoid redundant loads across iterations, which eliminates 33\% of the memory loads and yields a 1.11$\times$ speedup for the whole program. 
It is worth noting that the indirect accesses in this inefficient code snippet introduce challenges in compiler's static analysis and optimization. 

Additionally, \loadspy{} also identifies the same inefficiency occurring in srad\_v2. With the same optimization, srad\_v2 achieves a 1.12$\times$ speedup.
 
\begin{figure}[t]
\begin{lstlisting}[firstnumber=1,language=c, caption= Temporal load redundancy in Rodinia-3.1 srad\_v1. Array \texttt{image} is repeatedly loaded from memory while the values remain unchanged., label=lst:srad]
for (i=0; i<Nr; i++) {
  iN[i] = i-1;													
  iS[i] = i+1;													
}
...
iN[0]    = 0;                                                           
iS[Nr-1] = Nr-1;
...
for (j=0; j<Nc; j++) {											
  for (i=0; i<Nr; i++) {											 
    k = i + Nr*j;											
@$\blacktriangleright$@   Jc = image[k];												
@$\blacktriangleright$@   dN[k] = image[iN[i] + Nr*j] - Jc;							
@$\blacktriangleright$@   dS[k] = image[iS[i] + Nr*j] - Jc;							
  }
}
\end{lstlisting}
\end{figure}

\subsection{Rodinia-3.1 LavaMD}
\label{subsec:lavaMD}

LavaMD~\cite{rodinia} calculates particle potential and relocation among particles. We apply \loadspy{} to evaluate its OpenMP version. \loadspy{} reports that 87\% of memory loads are redundant, and the top contributor is the \texttt{glibc} function \texttt{exp} at line 7 shown in Listing~\ref{lst:lavaMD}. 
We notice that the value of \texttt{u2} often remains unchanged across iterations. As a result, a number of redundant loads and computations occur inside \texttt{exp} due to redundant function calls. 
With further analysis, we find that \texttt{a2} is a loop invariant, and \texttt{u2} is derived from \texttt{a2} and \texttt{r2}. Thus, we infer that \texttt{r2} often has the same value across iterations. 

To optimize this inefficiency, we introduce a conditional check on \texttt{r2} such that the program can reuse the return value of function \texttt{exp} from the previous iteration if the value of \texttt{r2} has not changed. This optimization eliminates 76\% of the memory loads and 93\% of the redundant memory loads, yielding a 1.39$\times$ speedup for the entire program.

\begin{figure}[t]
\begin{lstlisting}[firstnumber=1,language=c, caption=Temporal load redundancy in Rodinia-3.1 lavaMD due to redundant function calls., label=lst:lavaMD]
for (k=0; k<(1+box[l].nn); k++) {
  ...
  for (i=0; i<NUMBER_PAR_PER_BOX; i=i+1) {
    for (j=0; j<NUMBER_PAR_PER_BOX; j=j+1) {
      r2 = rA[i].v + rB[j].v - DOT(rA[i],rB[j]);
      u2 = a2*r2;
@$\blacktriangleright$@     vij= exp(-u2);
      fs = 2.*vij;
      ... 
    }
  }
}
\end{lstlisting}
\end{figure}

\section{Threats to validity}
\label{subsec:threat}
The threats mainly exist in applying \loadspy{} for code optimization. 
The same optimization for one application may show different speedups on different computer architectures. 
A given load redundancy fraction may not help estimate the potential speedup.
Some optimizations are input-specific, and a different profile may demand a different optimization. 
Based on the reported inefficiencies, programmers need to devise an optimization that is safe in any execution.

\section{Conclusions}
\label{sec:conclusion}

In this paper, we presented a study of identifying program inefficiencies by focusing on whole-program load redundancy. 
We demonstrate that redundant load operations are often a symptom of various inefficiencies arising from inputs, suboptimal data structure and algorithm choices, and missed compiler optimizations. 
To pinpoint these inefficiencies in complex software code bases, we have developed \loadspy{}, a fine-grained profiler that profiles load redundancy. \loadspy{} toolchain provides valuable guidance to developers for code tuning---calling contexts of the two parties involved in a redundancy, narrowed-down redundancy scopes to focus on optimization, metrics to understand relative significance of redundancy, and a GUI for the source code attribution. 
We evaluate \loadspy{} using several benchmarks and real-world applications. 
Guided by \loadspy{} we are able to optimize prior-known and new inefficiencies in several programs.
Eliminating temporal and spatial load redundancies resulted in nontrivial speedups.

\section*{Acknowledgment}
We thank reviewers for their valuable comments. This work is supported by Google Faculty Research Award and National Natural Science Foundation of China (No. 61502019).

%% Bibliography
%\clearpage
\balance{}
\bibliography{IEEEabrv,references}

\begin{comment}
%% Appendix
\appendix
\section{Appendix}

Text of appendix \ldots
\end{comment}

\end{document}